\newtheorem{theorem}{Theorem}[section]
\newtheorem{proposition}{Proposition}[section]
\newtheorem{lemma}{Lemma}[section]
\newtheorem{definition}{Definition}[section]
\newtheorem{example}{Example}[section]
\newcommand*{\rom}[1]{\expandafter\@slowromancap\romannumeral #1@}
\def\bege{\begin{equation}} \def\ende{\end{equation}}
   \def\begr{\begin{eqnarray}}
\def\endr{\end{eqnarray}} 
\def\bege{\begin{equation}} \def\ende{\end{equation}}
\def\begr{\begin{eqnarray}} \def\endr{\end{eqnarray}} \def\bnum{\begin{enumerate}} \def\enum{\end{enumerate}}
\begin{document}
	
\begin{center}\Large
\textbf{Hamming and Symbol-Pair Distances of Constacyclic Codes of Length $2p^s$ over $\frac{\mathbb{F}_{p^m}[u, v]}{\langle u^2, v^2, uv-vu\rangle}$}
\end{center}

\begin{center}
Divya Acharya$^{1}$, Prasanna Poojary$^{1, \ast}$, Vadiraja Bhatta G R$^{2}$
\end{center}

\begin{center}
$^1$Department of Mathematics, Manipal Institute of Technology Bengaluru, Manipal Academy of Higher Education, Manipal, Karnataka, India\\ 
$^2$ Department of Mathematics, Manipal Institute of Technology, Manipal Academy of Higher Education, Manipal, Karnataka, India\\ 
\end{center}
\abstract{Let $p$ be an odd prime. In this paper, we have determined the Hamming distances for constacyclic codes of length $2p^s$  over the finite commutative non-chain ring $\mathcal{R}=\frac{\mathbb{F}_{p^m}[u, v]}{\langle u^2, v^2, uv-vu\rangle}$. Also their symbol-pair distances are completely obtained.}\\\\
\textbf{Keywords:} Repeated-root codes, Constacyclic codes,  Hamming distance, Symbol-pair distances..\\\\
$^{\ast}$Corresponding: prasanna.poojary@manipal.edu; poojaryprasanna34@gmail.com	
\section{Introduction}
	The ability to efficiently encode constacyclic codes with a simple shift register makes them valuable.
 In the finite field $\mathbb{F}_q$, $\alpha$-constacyclic codes of length $n$ over $\mathbb{F}_q$ are categorized as ideals $\langle \ell(x)\rangle$  of the ambient ring $\frac{\mathbb{F}_q[x]}{\langle x^n-\alpha \rangle}$ , where $\ell(x)$ is a divisor of $x^n-\alpha$. The codes are referred to as simple root codes when the length of the code $n$ is relatively prime to the characteristic of the finite field $\mathbb{F}_q$. If not, they are known as repeated-root codes, which were initially studied in \cite{berman1967semisimple,massey1973polynomial,falkner1979existence,roth1986cyclic,van1991repeated, castagnoli1991repeated}.
 
 The structure of $\lambda$-constacyclic codes 
 over the chain ring $\frac{\mathbb{F}_{p^m}[u]}{\langle u^k \rangle}$ was examined by Dinh et al. \cite{dinh2016repeated} and Guenda and Gulliver \cite{guenda2015repeated}.
The algebraic structures of all constacyclic codes of length $mp^s$ over the ring $\mathbb{F}_{p^m}+u\mathbb{F}_{p^m}$, where $m=1,2,3,4$, are obtained in a series of papers  \cite{dinh2010constacyclic, dinh2015negacyclic, chen2016constacyclic, dinh2020constacyclic3ps, dinh2018negacyclic, dinh2018cyclic, dinh2019alpha,dinh2019class}. In \cite{yildiz2011cyclic}, Yildiz and Karadeniz examined cyclic codes of odd length over the ring 
$\frac{\mathbb{F}_{2}[u, v]}{\langle u^2, v^2, uv-vu\rangle}$ which is not a chain ring. As the Gray images of these cyclic codes, they found a few good binary codes. In \cite{karadeniz20111+}  Yildiz and Karadeniz studied $(1+v)$-constacyclic codes over the ring $\mathbb{F}_{2}+u\mathbb{F}_{2}+v\mathbb{F}_{2}+uv\mathbb{F}_{2}$ and they obtained cyclic codes over $\mathbb{F}_{2}+u\mathbb{F}_{2}$ as the image of  $(1+v)$-constacyclic codes over $\mathbb{F}_{2}+u\mathbb{F}_{2}+v\mathbb{F}_{2}+uv\mathbb{F}_{2}$ under natural Gray map.
	In \cite{yu20141} Haifeng et al. examined  $(1-uv)$-constacyclic codes over $\mathbb{F}_{p}+u\mathbb{F}_{p}+v\mathbb{F}_{p}+uv\mathbb{F}_{p}$ and showed that under a Gray map, images of these codes are distance invariant quasi-cyclic code of length $p^3n$ and index $p^2$ over $\mathbb{F}_{p}$. Negacyclic codes of odd length over $\frac{\mathbb{F}_{p}[u, v]}{\langle u^2, v^2, uv-vu\rangle}$ are studied by Ghosh \cite{ghosh2015negacyclic}. Bag \cite{bag2019classes} investigated $(\lambda_1 + u\lambda_2)$ and $(\lambda_1 + v\lambda_3)$-constacyclic codes of prime power length over $\frac{\mathbb{F}_{p^m}[u, v]}{\langle u^2, v^2, uv-vu\rangle}$.
	The more general ring $\frac{\mathbb{F}_{2}[u_1,u_2,\ldots,u_k]}{\langle u_{i}^2, v_{j}^2, u_iv_j-v_ju_i\rangle}$ was taken into consideration by the authors of \cite{dougherty2012cyclic}, who also examined the general characteristics of cyclic codes over such rings and identified the nontrivial one-generator cyclic codes.
	These investigations were expanded to cyclic codes over the ring $\frac{\mathbb{F}_{2^m}[u, v]}{\langle u^2, v^2, uv-vu\rangle}$ by Sobhani and Molakarimi in \cite{sobhani2013some}. Dinh et al. in \cite{dinh2020constacyclicps} investigated the structures of all constacyclic codes of prime power length over the ring $\frac{\mathbb{F}_{p^m}[u, v]}{\langle u^2, v^2, uv-vu\rangle}$.
	
One of the main problems in coding theory is determining the Hamming distance. In 2010, Dinh \cite{dinh2010constacyclic} computed the Hamming distances of all ($\alpha+u \beta$)-constacyclic codes of length $p^s$ over $\mathbb{F}_{p^m}+u\mathbb{F}_{p^m}$. Later, for all $\gamma$ -constacyclic codes of length $p^s$ over $\mathbb{F}_{p^m}+u\mathbb{F}_{p^m}$ Hamming distances are studied in \cite{dinh2018hamming}. The Hamming distances of all constacyclic codes of length $2p^s$	over $\mathbb{F}_{p^m}+u\mathbb{F}_{p^m}$ are completely determined in \cite{dinh2020hamming}. In 2020, the Hamming distance of $(\alpha+u \beta)$-constacyclic codes ( where $\alpha+u \beta$ is not a cube) of length $3p^s$ over $\mathbb{F}_{p^m}+u\mathbb{F}_{p^m}$ is established in \cite{dinh2020constacyclic3ps}. For $\lambda$-constacyclic codes of length $3p^s$ over $\mathbb{F}_{p^m}+u\mathbb{F}_{p^m}$, where $\lambda =\alpha+u \beta$ is a cube and $\lambda$ is not a cube in $\mathbb{F}_{p^m}$ Hamming distance was determined in \cite{dinh2023hamming}. Hamming distances of $\lambda$-constacyclic codes for $\lambda = \alpha+ \beta u +\delta uv, \alpha+\gamma v+\delta uv, \alpha +\beta u +\gamma v + \delta uv$ where $\alpha , \beta, \gamma \in \mathbb{F}^*_{p^m}$ and $\delta \in \mathbb{F}_{p^m}$ are completely determined in \cite{dinh2022self}.
	
A novel coding method for symbol pair read channels was presented by Cassuto et al. \cite{40} in 2010. In this method, the outputs of the read process are consecutive symbol pairs. The symbol-pair  distance of constacyclic codes over $\mathbb{F}_{p^m}$ are studied in \cite{43, dinh2019symbol, 51}. Symbol-Pair distances of repeated-root constacyclic codes of prime power lengths over $\mathbb{F}_{p^m}+u\mathbb{F}_{p^m}$ and  $\mathbb{F}_{p^m} [u]/\langle u^3\rangle$ were studied in \cite{dinh2018hamming} and \cite{44} respectively. Also, Maximum Distance Separable Ssymbol-pair codes were studied by several authors(see, for example \cite{45,46,47,48,49,50} )

Section 2 provides some preliminary information, and the remaining portion of the paper is structured as follows. Section 3 gives the brief outline of $(\alpha_1 + \alpha_2 u + \alpha_3 v + \alpha_4 uv)$-constacyclic code of length $2p^s$ over the ring $\mathcal{R}$ and also we determined the number of codewords. In section 4, we obtain the Hamming distances of $(\alpha_1 +\alpha_2 u+  \alpha_3 v + \alpha_4 uv)$ and $(\alpha_1 +\alpha_3 v + \alpha_4 uv)$-constacyclic codes of length $2p^s$ over $\mathcal{R}$. In section 5, we determine the symbol-pair  distance of $\alpha$-constacyclic codes of length $2p^s$ over $\mathcal{R}$, where $\alpha =\alpha_1 +\alpha_2 u+  \alpha_3 v + \alpha_4 uv, \alpha_1 +\alpha_3 v + \alpha_4 uv$. 

\section{Preliminaries}
	Let $\Re$ be a finite commutative ring with identity 1 and let $I$ be an ideal. If only one element generates $I$ of $\Re$,  it is referred to as a $principal ideal$. If every ideal in the ring $\Re$ is a principle,  then the ring is a $principal~ ideal~ ring$. If $\Re$ has a unique maximum ideal,  then $\Re$ is referred to as a $local~ ring$. If the set of all ideals in  $\Re$ forms a chain under inclusion,  the ring is referred to as a $chain~ ring.$
	\begin{proposition}\cite{dinh2004cyclic}\label{prop1}
		For a finite commutative ring $\Re$ the following conditions are equivalent:
		\begin{enumerate}
			\item $\Re$ is a local ring and the maximal ideal of is principal;
			\item $\Re$ is a local principal ideal ring;
			\item $\Re$ is a chain ring.
		\end{enumerate}
	\end{proposition}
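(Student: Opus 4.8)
The plan is to prove the cyclic chain of implications $(3)\Rightarrow(1)\Rightarrow(2)\Rightarrow(3)$, which together yield the full equivalence. Throughout I would lean on the standing hypothesis that $\Re$ is \emph{finite}: being Artinian, its Jacobson radical is nilpotent, and since in a local ring the Jacobson radical coincides with the unique maximal ideal, this nilpotency is the engine driving every step.

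First, for $(3)\Rightarrow(1)$: if the ideals of $\Re$ form a chain under inclusion, then any two maximal ideals are comparable and hence equal, so $\Re$ is local; write $\mathfrak{m}$ for its maximal ideal. To see that $\mathfrak{m}$ is principal, I would use finiteness to pick a proper ideal $\mathfrak{m}'$ that is maximal among the ideals strictly contained in $\mathfrak{m}$, and then choose $\pi\in\mathfrak{m}\setminus\mathfrak{m}'$. The ideal $\langle\pi\rangle$ is comparable to $\mathfrak{m}'$ by hypothesis and cannot lie inside $\mathfrak{m}'$ (since $\pi\notin\mathfrak{m}'$), while $\langle\pi\rangle\subseteq\mathfrak{m}$; the chain condition then forces $\langle\pi\rangle=\mathfrak{m}$.

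The heart of the argument is $(1)\Rightarrow(2)$, where I would first establish a normal form for elements. Writing $\mathfrak{m}=\langle\pi\rangle$, nilpotency gives $\pi^{N}=0$ for some $N$, and I claim every nonzero $a\in\Re$ can be written $a=u\pi^{k}$ with $u$ a unit and $0\le k<N$. Indeed, since the non-units are exactly the elements of $\mathfrak{m}=\langle\pi\rangle$, one takes $k$ maximal with $a\in\langle\pi^{k}\rangle$ (finite because $\langle\pi^{N}\rangle=0$) and writes $a=b\pi^{k}$; maximality forces $b\notin\mathfrak{m}$, i.e. $b$ is a unit. Granting this, for a nonzero ideal $I$ let $k$ be the least exponent occurring among the normal forms of the nonzero elements of $I$; then $\pi^{k}\in I$ gives $\langle\pi^{k}\rangle\subseteq I$, while the minimality of $k$ gives the reverse inclusion, so $I=\langle\pi^{k}\rangle$ is principal. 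Finally, $(2)\Rightarrow(3)$ is immediate from the same normal form: in a local principal ideal ring every ideal is generated by a single element $u\pi^{k}$, hence equals $\langle\pi^{k}\rangle$, and these powers satisfy $\Re\supseteq\langle\pi\rangle\supseteq\langle\pi^{2}\rangle\supseteq\cdots$, which is a chain.

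The step I expect to be the main obstacle is the existence and uniqueness of the normal form $a=u\pi^{k}$ in $(1)\Rightarrow(2)$: this is precisely the point where finiteness, through the nilpotency of $\mathfrak{m}$, must be invoked to guarantee that the descent along $\langle\pi\rangle\supseteq\langle\pi^{2}\rangle\supseteq\cdots$ terminates. Once this normal form is in hand, the classification of all ideals as powers of $\mathfrak{m}$ follows, and both remaining implications fall into place with little further effort.
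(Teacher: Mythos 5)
Your proof is correct. Note that the paper does not prove this proposition at all: it is stated as a quoted result with a citation to \cite{dinh2004cyclic}, so there is no in-paper argument to compare against; your cyclic chain $(3)\Rightarrow(1)\Rightarrow(2)\Rightarrow(3)$, with the normal form $a=u\pi^{k}$ obtained from nilpotency of the maximal ideal, is essentially the standard proof found in that reference and in the literature on finite chain rings. Two harmless loose ends you could tidy: in $(3)\Rightarrow(1)$ the case $\mathfrak{m}=\langle 0\rangle$ (i.e.\ $\Re$ a field) should be noted separately, since otherwise no ideal strictly contained in $\mathfrak{m}$ exists; and the ``uniqueness'' of the normal form that you flag as the main obstacle is never actually needed --- existence alone drives both $(1)\Rightarrow(2)$ and $(2)\Rightarrow(3)$, and you only prove existence.
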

	Let $\Re$ be a finite commutative ring with identity. A code $\mathcal{C}$ of length $n$ over $\Re$ is a nonempty subset of $\Re^n$. An element of $\mathcal{C}$ is called a codeword. If $\mathcal{C}$ is an $\Re$-submodule of $\Re^n$,  then $\mathcal{C}$ is said to be linear. Let $\alpha$ be a unit of $\Re$.  The $\alpha$-constacyclic shift $\sigma_{\alpha}$ on $\Re^n$ is the shift,
	
	\begin{center}
		$\sigma_{\alpha}(\zeta_0,  \zeta_1,\ldots,  \zeta_{n-1}) = (\alpha \zeta_{n-1},  \zeta_0,\ldots, \zeta_{n-2}).$
	\end{center}
	and a code $\mathcal{C}$ is said to be $\alpha$-constacyclic if $\mathcal{C}$ is closed under the $\alpha$-constacyclic shift $\sigma_{\alpha}$.
	If  $\alpha$ is equal to 1(or -1), then the $\alpha$-constacyclic codes are referred to as cyclic (or negacyclic) codes.

	Each codeword $\zeta = (\zeta_0,  \zeta_1, \ldots,  \zeta_{n-1})$ in $\mathcal{C}$ can be viewed with its polynomial representation $\zeta(x) = \zeta_ 0+ \zeta_1x +\cdots+
	\zeta_{n-1}x^{n-1}$.
	Each codeword in $\mathcal{C}$, denoted by $\zeta = (\zeta_0,  \zeta_1, \ldots,  \zeta_{n-1})$, uniquely identified using its polynomial representation, $\zeta(x) = \zeta_ 0+ \zeta_1x +\cdots+
	\zeta_{n-1}x^{n-1}$. From that, the well-known \cite{macwilliams1977theory, huffman2010fundamentals} and simple proposition as follows:
	\begin{proposition}\label{1}
		A linear code $\mathcal{C}$ of length $n$ over $\Re$ is an $\alpha$-constacyclic if and only if $\mathcal{C}$ is an ideal of $\frac{\Re[x]}{\langle x^n-\alpha \rangle}$.
	\end{proposition}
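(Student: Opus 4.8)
The plan is to exploit the natural correspondence between a codeword $\zeta = (\zeta_0, \zeta_1, \ldots, \zeta_{n-1})$ and its polynomial representation $\zeta(x) = \zeta_0 + \zeta_1 x + \cdots + \zeta_{n-1} x^{n-1}$, viewed as an element of the quotient ring $R = \frac{\Re[x]}{\langle x^n-\alpha \rangle}$. The single computation at the heart of the argument is to see what multiplication by $x$ does to $\zeta(x)$ inside this ring. Since $x^n = \alpha$ in $R$, I would compute
\[
x\,\zeta(x) = \zeta_0 x + \zeta_1 x^2 + \cdots + \zeta_{n-2}x^{n-1} + \zeta_{n-1}x^n = \alpha\zeta_{n-1} + \zeta_0 x + \cdots + \zeta_{n-2}x^{n-1},
\]
which is precisely the polynomial attached to $\sigma_{\alpha}(\zeta) = (\alpha\zeta_{n-1}, \zeta_0, \ldots, \zeta_{n-2})$. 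Thus multiplication by $x$ in $R$ realizes the $\alpha$-constacyclic shift.

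With this identification in hand, I would prove the two implications. For the forward direction, assume $\mathcal{C}$ is a linear $\alpha$-constacyclic code. Linearity makes $\mathcal{C}$ closed under addition and under scalar multiplication by elements of $\Re$, so it is an $\Re$-submodule of $R$. Being closed under $\sigma_{\alpha}$ means, by the computation above, that $\mathcal{C}$ is closed under multiplication by $x$; iterating gives closure under multiplication by every power $x^i$. Since an arbitrary element of $R$ is an $\Re$-linear combination of the monomials $1, x, \ldots, x^{n-1}$, combining $\Re$-linearity with closure under each $x^i$ shows that $\mathcal{C}$ absorbs multiplication by every element of $R$; hence $\mathcal{C}$ is an ideal.

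For the converse, assume $\mathcal{C}$ is an ideal of $R$. An ideal is in particular an additive subgroup that absorbs multiplication by the constant polynomials, i.e. by the scalars of $\Re$, so $\mathcal{C}$ is a linear code. Absorbing multiplication by $x$ in particular means $\mathcal{C}$ is closed under $\sigma_{\alpha}$, so $\mathcal{C}$ is $\alpha$-constacyclic.

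As for difficulty, there is no real obstacle here: the only substantive step is the monomial computation showing that $x\,\zeta(x)$ reproduces $\sigma_{\alpha}(\zeta)$, and everything else is a routine translation between the module/ideal language and the closure-under-shift language. The one point worth stating carefully is that every element of $R$ is an $\Re$-linear combination of powers of $x$, which is exactly what lets closure under $x$ together with $\Re$-linearity upgrade to full ideal absorption.
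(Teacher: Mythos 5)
Your proof is correct and is precisely the standard argument that the paper itself does not spell out but invokes as ``well-known and simple'' by citing MacWilliams--Sloane and Huffman--Pless: the identification of the $\alpha$-constacyclic shift with multiplication by $x$ in $\frac{\Re[x]}{\langle x^n-\alpha\rangle}$, followed by the routine translation between closure under the shift plus linearity and the ideal property. Nothing is missing; both implications are handled exactly as in the classical treatment.
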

The Hamming weight of $x$, represented as $wt_H(x)$, is the nonzero entries of a codeword $x = (x_0, x_1,\ldots x_{n-1})\in A^n$. The Hamming distance $d_H(x,y)$ of two
words $x$ and $y$ equals the number of components in which they differ. i.e., $d_H(x,y)=wt_H(x-y)$. 
The Hamming distance of code $\mathcal{C}$ containing at least two words, $d(\mathcal{C}) = min\{d(x, y): x, y \in \mathcal{C}, x \neq y\}$. The Hamming weight of $\mathcal{C}$, denoted $wt(\mathcal{C})$, is the smallest of the weights of the nonzero codewords of $\mathcal{C}$. For a linear code, the Hamming weight and Hamming distance of the code are equal.

The symbol-pair distance was defined by Cassuto and Blaum in \cite{40}.
 
    Let $x = (x_0, x_1,\ldots x_{n-1})$ and $y = (y_0, y_1,\ldots y_{n-1})$  be two vectors in $A^n$, where $A$ is a code alphabet. The symbol-pair distance is given by using Hamming distance over the alphabet $(A, A)$ as follows:
    \begin{equation*}
        d_{sp}(x, y) = \vert \{i : (x_i, x_{i+1})\neq (y_i, y_{i+1})\}\vert.
     \end{equation*}
Then  $d_{sp}(\mathcal{C}) =\underset{\underset{x\neq y}{x,y\in \mathcal{C}}}{min}\{d_{sp}(x, y)\} $ is the symbol-pair distance of code $\mathcal{C}$.

	Let $p$ be an odd prime number and $m$ be a positive integer. Let $\mathcal{R} = \frac{\mathbb{F}_{p^m}[u, v]}{\langle u^2, v^2, uv-vu\rangle}=\mathbb{F}_{p^m} + u\mathbb{F}_{p^m} + v\mathbb{F}_{p^m} +uv\mathbb{F}_{p^m}$ ($u^2 = 0$,  $v^2 = 0$,  $uv = vu$). An element $\alpha=\alpha_1 + \alpha_2 u + \alpha_3 v + \alpha_4 uv \in \mathcal{R}$ is a unit iff $\alpha_1$ is non zero in $\mathbb{F}_{p^m}$. Let $\alpha_1 \in \mathbb{F}^{*}_{p^m}$. Now $\alpha_1^{p^{tm}}=\alpha_1$ for any positive integer $t$. For positive integer $m$ and $s$, by Division algorithm,  there exists non-negative integers $q_0$ and $r_0$ such that $s= q_0 m+r_0$ with $0\leq r_0 \leq m-1$. Let $\alpha_0=\alpha_1^{p^{(q_0+1)m-s}}=\alpha_1^{p^{m-r_0}}$. Then ${\alpha_0}^{p^{s}}=\alpha_1^{p^{(q_0+1)m}}=\alpha_1$.

    For any unit $\alpha$ of $\mathcal{R}$, let 
  \begin{center}
        $\mathcal{R}_\alpha=\frac{\mathcal{R}[x]}{\langle x^{2p^s}-\alpha \rangle}$.
  \end{center}
  It follows from Proposition \ref{1} that $\alpha$-constacyclic codes of length $2p^s$ over $\mathcal{R}$ are ideals
of $\mathcal{R}_\alpha$.

\section{$(\alpha_1 + \alpha_2 u + \alpha_3 v + \alpha_4 uv)$-constacyclic code of length $2p^s$ over the ring $\mathcal{R}$}
In \cite{41}, Y. Ahendouz and I. Akharraz studied the $(\alpha_1 + \alpha_2 u + \alpha_3 v + \alpha_4 uv)$-constacyclic code of length $2p^s$ over the ring $\mathcal{R}$, where $\alpha_1  \in \mathbb{F}^*_{p^m}$ and $\alpha_2, \alpha_3,\alpha_4 \in \mathbb{F}_{p^m}$ with $\alpha_2$ and $\alpha_3$ not both zero. Also, their duals are determined. 

	Let $\psi_u$ denote reduction modulo $u$ map from $R_{\alpha_1+\alpha_{2}u+  \alpha_{3}v+\alpha_{4} uv}$ to $\frac{(\mathbb{F}_{p^m} + v\mathbb{F}_{p^m})[x]}{\langle x^{2p^s}-(\alpha_1 + \alpha_3 v) \rangle}$. The map $\psi_u$ is a surjective ring homomorphism. Let
	$\mathcal{C}$ be an ideal of $R_{\alpha_1+\alpha_{2}u+  \alpha_{3}v+\alpha_{4} uv}$. We define the torsion of $\mathcal{C}$ by $Tor(\mathcal{C})=\{a(x) \in \frac{(\mathbb{F}_{p^m} + v\mathbb{F}_{p^m})[x]}{\langle x^{2p^s}-(\alpha_1 + \alpha_3 v) \rangle} : ua(x)\in \mathcal{C}\}$ and residue of $\mathcal{C}$ by  $Res(\mathcal{C})=\psi_u(\mathcal{C}).$ Clearly, $Tor(\mathcal{C})$ and $Res(\mathcal{C})$  are ideals of $\frac{(\mathbb{F}_{p^m} + v\mathbb{F}_{p^m})[x]}{\langle x^{2p^s}-(\alpha_1 + \alpha_3 v) \rangle}$ and $Tor(\mathcal{C}) \cong Ker(\psi_u\vert \mathcal{C} )$. Therefore, $\vert \mathcal{C}\vert = \vert Tor(\mathcal{C})\vert \vert Res(\mathcal{C})\vert $. We calculate the number of codewords in $Tor(\mathcal{C})$ and $Res(\mathcal{C})$ to get the number of codewords in $\mathcal{C}$.  We know from \cite{chen2016constacyclic} that any ideal of $\frac{(\mathbb{F}_{p^m} + v\mathbb{F}_{p^m})[x]}{\langle x^{2p^s}-(\alpha_1 + \alpha_3 v) \rangle}$ is of the form  $\langle (x^2-\alpha_0)^\ell \rangle $, where
	$0 \leq \ell \leq 2p^s$ and it has $p^{2m(2p^s-\ell)}$ codewords.

The following theorem gives the number of codewords, $\eta_\mathcal{C}$, of the $\alpha$-constacyclic code of length $2p^s$ over the ring $\mathcal{R}$. For computational purposes, we incorporate the results of \cite{41} into the cases.



\begin{theorem}\label{thm1}
    Let $\alpha$ be a unit of the ring $\mathcal{R}$. Then,
    \begin{enumerate}
        \item Suppose $\alpha$ is a square in $\mathcal{R}$, say $\alpha=\gamma^2$, where $\gamma \in \mathcal{R}$. Then
    \begin{equation*}
        \frac{\mathcal{R}[x]}{\langle x^{2p^s}-\alpha \rangle} \cong \frac{\mathcal{R}[x]}{\langle x^{p^s}+\gamma  \rangle}\oplus\frac{\mathcal{R}[x]}{\langle x^{p^s}-\gamma \rangle}.
    \end{equation*}
    i.e., a direct sum of an $\gamma$-constacyclic code and $-\gamma$-constacyclic code of length $p^s$ over $\mathcal{R}$  can be used to represent each $\alpha$-constacyclic code of length $2p^s$.
        \item Suppose $\alpha=\alpha_1 + \alpha_2 u + \alpha_3 v + \alpha_4 uv$ is not a square in $\mathcal{R}$, where $\alpha_1 , \alpha_2, \alpha_3 \in \mathbb{F}^*_{p^m}$ and $\alpha_4 \in \mathbb{F}_{p^m}$ the distinct ideals of the ring $R_{\alpha_1+\alpha_{2}u+  \alpha_{3}v+\alpha_{4} uv}$ are given by
        \begin{itemize}
        \item Type A: $\langle 0 \rangle $, $\langle 1 \rangle $. And $\eta_\mathcal{C}=1$ and $\eta_\mathcal{C}=p^{8mp^{s}}$ respectively.
        \item Type B: $\langle u(x^2-\alpha_0)^\ell \rangle $,where $0\leq  \ell \leq 2p^s-1$ And $\eta_\mathcal{C}=p^{2m(2p^{s}-\ell)}$.
        \item Type C: $\langle (x^2-\alpha_0)^\ell +u(x^2-\alpha_0)^t z(x) \rangle $, where $0\leq  \ell \leq 2p^s-1$,  $0\leq  t < \ell $ and either $z(x)$ is 0 or $z(x)$ is a unit which can be represented as $z(x)=\sum\limits_{\kappa}^{}(z_{0\kappa}x+z_{1\kappa})(x^2-\alpha_0)^{\kappa}$ with $z_{0\kappa},  z_{1\kappa} \in \mathbb{F}_{p^m}$ and $z_{00}x+z_{10} \neq 0$. And \begin{itemize}
				\item If $z(x)=0$, then
				\begin{center}
                    $\eta_\mathcal{C}=$
                    $\begin{cases}
                         p^{4m(2p^s-\ell)} & \text{if}\quad 1\leq \ell\leq p^s;\\
                        p^{2m(3p^s-\ell)} & \text{if}\quad p^s < \ell \leq 2p^s-1.
                    \end{cases}$
				\end{center}
				\item If $z(x)$ is a unit and $\ell \neq p^s+t$ then 
				\begin{center}
					$\eta_\mathcal{C}=$
					$\begin{cases}
						p^{4m(2p^s-\ell)} & \text{if}\quad 1\leq \ell\leq p^s;\\
						p^{2m(3p^s-\ell)} & \text{if}\quad p^s \leq \ell < p^s+t;\\
						p^{2m(2p^s-t)} & \text{if} \quad  p^s+ t < \ell \leq 2p^s-1.
					\end{cases}$
				\end{center} 
				\item If $z(x)$ is a unit and $\ell=p^s+t$, then $\eta_\mathcal{C}=p^{2m(2p^s-t)}$.
			\end{itemize}
        \item Type D: 
				$\langle (x^2-\alpha_0)^\ell +u\sum\limits_{\kappa=0}^{\mu-1}(a_{\kappa}x+b_{\kappa})(x^2-\alpha_0)^{\kappa}, u (x^2-\alpha_0)^{\mu} \rangle $, where $0\leq  \ell \leq 2p^s-1$,  $a_{\kappa}, b_{\kappa} \in  \mathbb{F}_{p^m}$ and $\mu\leq \Im$ with $\Im$ being the smallest integer such that 
            \begin{center}
				$u (x^2-\alpha_0)^{\Im} \in \langle (x^2-\alpha_0)^\ell +u\sum\limits_{\kappa=0}^{\mu-1}(a_{\kappa}x+b_{\kappa})(x^2-\alpha_0)^{\kappa}  \rangle $
            \end{center}
			or equivalently, 
			\begin{center}
				$\langle (x^2-\alpha_0)^\ell +u(x^2-\alpha_0)^t z(x),  u (x^2-\alpha_0)^{\mu} \rangle $
			\end{center}
			with $z(x)$ as in Type C and deg $(z(x))\leq\mu-t-1$.
   and
			\begin{center}
				$\mu < \Im=$
				$\begin{cases}
					min\{\ell, p^s\}  & \text{if}\quad z(x)=0;\\
					min\{\ell,p^s,  2p^s-\ell+t\}  &\text{if}\quad z(x)\neq 0\quad \text{and}\quad \ell \neq p^s+t;\\
					p^s &\text{if}\quad z(x)\neq 0 \quad\text{and}\quad \ell= p^s+t.
				\end{cases}$
			\end{center}
	And $\eta_\mathcal{C}=p^{2m(4p^s-\ell-\mu)}$.
		\end{itemize}
    
    \item Suppose $\alpha=\alpha_1 + \alpha_3 v + \alpha_4 uv $ be a non-square in  $\mathcal{R}$ where $\alpha_1 , \alpha_3, \alpha_4 \in \mathbb{F}^*_{p^m}$ the distinct ideals of the ring $R_{\alpha_1+\alpha_{3}v+\alpha_{4} uv}$ are given by
    \begin{itemize}
        \item Type A: $\langle 0 \rangle $, $\langle 1 \rangle $. And $\eta_\mathcal{C}=1$ and $\eta_\mathcal{C}=p^{8mp^{s}}$ respectively.
        \item Type B: $\langle u(x^2-\alpha_0)^\ell \rangle $,where $0\leq  \ell \leq 2p^s-1$ And $\eta_\mathcal{C}=p^{2m(2p^{s}-\ell)}$.
        \item Type C: $\langle (x^2-\alpha_0)^\ell +u(x^2-\alpha_0)^t z(x) \rangle $, where $0\leq  \ell \leq 2p^s-1$,  $0\leq  t < \ell $ and either $z(x)$ is 0 or $z(x)$ is a unit which can be represented as $z(x)=\sum\limits_{\kappa}^{}(z_{0\kappa}x+z_{1\kappa})(x^2-\alpha_0)^{\kappa}$ with $z_{0\kappa},  z_{1\kappa} \in \mathbb{F}_{p^m}$ and $z_{00}x+z_{10} \neq 0$. And \begin{itemize}
				\item If $z(x)=0$, then
			$\eta_\mathcal{C}=p^{4m(2p^s-\ell)}.$
				\item If $z(x)$ is a unit and $\ell \neq p^s+t$ then 
				\begin{center}
            $\eta_\mathcal{C}=$
            $\begin{cases}
                p^{4m(2p^s-\ell)} & \text{if}\quad 1\leq \ell\leq p^s+\frac{t}{2};\\
                p^{2m(2p^s-t)} & \text{if} \quad  p^s+ \frac{t}{2} < \ell \leq 2p^s-1.
            \end{cases}$
        \end{center} 
				\item If $z(x)$ is a unit and $\ell=p^s+t$, then $\eta_\mathcal{C}=p^{2m(2p^s-t)}$.
			\end{itemize}
        \item Type D: 
				$\langle (x^2-\alpha_0)^\ell +u\sum\limits_{\kappa=0}^{\mu-1}(a_{\kappa}x+b_{\kappa})(x^2-\alpha_0)^{\kappa}, u (x^2-\alpha_0)^{\mu} \rangle $, where $0\leq  \ell \leq 2p^s-1$,  $a_{\kappa}, b_{\kappa} \in  \mathbb{F}_{p^m}$ and $\mu\leq \Im$ with $\Im$ being the smallest integer such that 
            \begin{center}
				$u (x^2-\alpha_0)^{\Im} \in \langle (x^2-\alpha_0)^\ell +u\sum\limits_{\kappa=0}^{\mu-1}(a_{\kappa}x+b_{\kappa})(x^2-\alpha_0)^{\kappa}  \rangle $
            \end{center}
			or equivalently, 
			\begin{center}
				$\langle (x^2-\alpha_0)^\ell +u(x^2-\alpha_0)^t z(x),  u (x^2-\alpha_0)^{\mu} \rangle $
			\end{center}
			with $z(x)$ as in Type C and deg $(z(x))\leq\mu-t-1$.
   and
   \begin{center}
			$\mu < \Im=$
			$\begin{cases}
				\ell  & \text{if}\quad z(x)=0;\\
				min\{\ell, 2p^s +t-\ell\}  &\text{if}\quad z(x)\neq 0. \\
			\end{cases}$
		\end{center}
	And $\eta_\mathcal{C}=p^{2m(4p^s-\ell-\mu)}$.
		\end{itemize}
    \end{enumerate}
\end{theorem}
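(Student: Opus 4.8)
For statement (1) the plan is to split the ambient ring by the Chinese Remainder Theorem. Since $\alpha=\gamma^2$ is a unit, so is $\gamma$, and writing $x^{2p^s}=(x^{p^s})^2$ gives the difference‑of‑squares factorisation
\[
x^{2p^s}-\alpha=(x^{p^s})^2-\gamma^2=(x^{p^s}-\gamma)(x^{p^s}+\gamma).
\]
I would then check that the two factors are comaximal: their difference $2\gamma$ is a unit of $\mathcal{R}$ because $p$ is odd (so $2\in\mathbb{F}^*_{p^m}$) and $\gamma$ is a unit, whence $\langle x^{p^s}-\gamma\rangle+\langle x^{p^s}+\gamma\rangle=\langle 1\rangle$. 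The Chinese Remainder Theorem yields the stated ring isomorphism, and under it every ideal — i.e.\ every $\alpha$‑constacyclic code of length $2p^s$ — decomposes as the direct sum of an ideal in each factor, that is, a $\gamma$‑constacyclic and a $(-\gamma)$‑constacyclic code of length $p^s$.

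For statements (2) and (3) the classification of ideals into Types A--D is taken from \cite{41}; the contribution is the count $\eta_\mathcal{C}$. The engine is the identity $\vert\mathcal{C}\vert=\vert Tor(\mathcal{C})\vert\,\vert Res(\mathcal{C})\vert$ from Section 3, together with the fact that $Tor(\mathcal{C})$ and $Res(\mathcal{C})$ are both ideals of the chain‑ring quotient $T:=\frac{(\mathbb{F}_{p^m}+v\mathbb{F}_{p^m})[x]}{\langle x^{2p^s}-(\alpha_1+\alpha_3 v)\rangle}$, so each equals $\langle(x^2-\alpha_0)^{j}\rangle$ for a unique $j$ and contributes $p^{2m(2p^s-j)}$ codewords. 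Hence for each type it suffices to pin down the residue exponent $j_R$ and the torsion exponent $j_T$, after which $\eta_\mathcal{C}=p^{2m(2p^s-j_R)}\cdot p^{2m(2p^s-j_T)}=p^{2m(4p^s-j_R-j_T)}$. Type A is immediate; for Type B one has $Res(\mathcal{C})=0$ (so $j_R=2p^s$) and $Tor(\mathcal{C})=\langle(x^2-\alpha_0)^\ell\rangle$, giving $p^{2m(2p^s-\ell)}$; for Type C, $j_R=\ell$ and $j_T=\Im$; and for Type D, $j_R=\ell$, $j_T=\mu$, which instantly yields $\eta_\mathcal{C}=p^{2m(4p^s-\ell-\mu)}$.

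The core of the argument is thus the determination of $\Im$, and this hinges on one structural identity. Writing $w=x^2-\alpha_0$ and using $\alpha_0^{p^s}=\alpha_1$ together with the Frobenius identity in characteristic $p$, I would first establish in $R_\alpha$ that
\[
w^{p^s}=x^{2p^s}-\alpha_1=\alpha-\alpha_1=\alpha_2 u+\alpha_3 v+\alpha_4 uv .
\]
Multiplying by $u,v,uv$ and by further powers of $w$ gives the derived relations $u\,w^{p^s}=\alpha_3 uv$, $v\,w^{p^s}=\alpha_2 uv$, $uv\,w^{p^s}=0$, and in case (2) also $w^{2p^s}=2\alpha_2\alpha_3\,uv$ and $w^{3p^s}=0$. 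These pinpoint exactly when a pure element $u\,w^{\Im}$ lies in the residue‑generated part and so fix $\Im$ in each subcase; combined with $j_R$ they reproduce every tabulated value (for Type C with $z=0$, for instance, one verifies $\Im=\min\{\ell,p^s\}$, which gives the two‑branch formula, since $uv=(2\alpha_2\alpha_3)^{-1}w^{2p^s}\in\langle w^\ell\rangle$ for $\ell\le 2p^s$ forces $\Im\le p^s$).

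The main obstacle I anticipate is the bookkeeping that separates case (2) from case (3) and handles the boundary $\ell=p^s+t$. The decisive difference is nilpotency: when $\alpha_2,\alpha_3\neq 0$ one has $w^{2p^s}=2\alpha_2\alpha_3 uv\neq 0$, so $uv$ is itself a power of $w$ and the torsion exponent can fall as low as $p^s$; but when $\alpha_2=0$ one computes $w^{2p^s}=0$, so $uv$ is \emph{not} recoverable from powers of $w$ and $\Im$ cannot drop below the naive value $\ell$ — precisely what deletes the middle branch and shifts the threshold to $p^s+\tfrac{t}{2}$ in statement (3). Tracking how the unit $z(x)$ interacts with the relation $w^{p^s}=\alpha_2 u+\alpha_3 v+\alpha_4 uv$, above all in the degenerate case $\ell=p^s+t$, is where the delicate computation lies; the remaining work is a routine application of the torsion--residue count.
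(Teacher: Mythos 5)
Your proposal is correct and takes essentially the same route as the paper: the paper supplies no separate proof of this theorem, instead importing the classification (including the formulas for $\Im$) from \cite{41} and computing each $\eta_\mathcal{C}$ via $\vert\mathcal{C}\vert=\vert Tor(\mathcal{C})\vert\cdot\vert Res(\mathcal{C})\vert$, where both factors are ideals $\langle(x^2-\alpha_0)^{j}\rangle$ of the chain ring $\frac{(\mathbb{F}_{p^m}+v\mathbb{F}_{p^m})[x]}{\langle x^{2p^s}-(\alpha_1+\alpha_3 v)\rangle}$ --- precisely your counting engine with exponents $j_R$ and $j_T$. Your additional details (the comaximality/CRT argument for part (1), and the Frobenius identity $(x^2-\alpha_0)^{p^s}=\alpha_2 u+\alpha_3 v+\alpha_4 uv$ with its consequences used to re-derive $\Im$ and to separate cases (2) and (3)) are consistent with the paper and in fact make explicit what the paper delegates to the citation.
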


\section{Hamming distance}
    In general, it is quite challenging to calculate the exact values of Hamming distances of a class of codes. For the class of  $(\alpha_1 +\alpha_2 u+  \alpha_3 v + \alpha_4 uv)$-constacyclic code of length $2p^s$ over $R_{u^2, v^2, p^m}$  when  $\alpha=(\alpha_1 +\alpha_2 u+  \alpha_3 v + \alpha_4 uv)$ is not a square in $\mathbb{F}_{p^m}$ , we calculate the Hamming distances. Here, we revisit these findings from \cite{dinh2020hamming}.
    \begin{theorem}\cite{dinh2020hamming}\label{thm3}
        Let $\mathcal{C}=\langle (x^2-\alpha_0)^\ell \rangle $$\subseteq$ $\frac{\mathbb{F}_{p^m}[x]}{\langle x^{2p^s}-\alpha \rangle}$,  for $0 \leq \ell \leq p^s$.  Then the  Hamming distance $d_H(\mathcal{C})$ is completely given by
        \begin{center}
            $d_H(\mathcal{C})=$
            $\begin{cases}
                1  &\text{if}\quad \ell=0;\\
                 2  &\text{if}\quad 1\leq  \ell \leq p^{s-1};\\
                (\beta_{0} +2)  &\text{if}\quad \beta_{0} p^{s-1}+1\leq \ell\leq (\beta_{0}+1)p^{s-1} \quad where \quad 1 \leq \beta_{0} \leq p-2;\\
                (\Gamma+1)p^{\gamma}  & \text{if}\quad p^s-p^{s-\gamma}+(\Gamma+1)p^{s-\gamma-1}+1\leq \ell\leq p^s-p^{s-\gamma}+\Gamma p^{s-\gamma-1}\\
                & \quad where\quad 1 \leq \Gamma \leq p-1  \quad and \quad 1 \leq \gamma \leq s-1;\\
                 0  &\text{if}\quad \ell=p^s.
            \end{cases}$
        \end{center}
    \end{theorem}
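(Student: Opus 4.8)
The plan is to first turn the ambient ring into a finite chain ring, then to strip off the factor of $2$ in the length by an even/odd decomposition, thereby reducing the whole computation to the classical repeated-root distance for length $p^s$.

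First I would record that, since $\alpha_0^{p^s}=\alpha$ and $\mathrm{char}\,\mathbb{F}_{p^m}=p$, the freshman's dream gives $x^{2p^s}-\alpha=(x^2-\alpha_0)^{p^s}$, so the ambient ring is $R=\mathbb{F}_{p^m}[x]/\langle(x^2-\alpha_0)^{p^s}\rangle$. Because $\alpha$ is a non-square and the Frobenius power $z\mapsto z^{p^s}$ is a group automorphism of $\mathbb{F}_{p^m}^{\ast}$ fixing its unique index-$2$ subgroup, $\alpha_0$ is a non-square as well, so $x^2-\alpha_0$ is irreducible over $\mathbb{F}_{p^m}$. Hence $R$ is local with maximal ideal $\langle x^2-\alpha_0\rangle$ of nilpotency index $p^s$, and by Proposition \ref{prop1} it is a chain ring whose only ideals are $\mathcal{C}_\ell=\langle(x^2-\alpha_0)^\ell\rangle$, $0\le\ell\le p^s$. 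The two extreme cases are then immediate: $\ell=0$ gives $\mathcal{C}=R$, which contains the weight-one word $1$, so $d_H=1$; and $\ell=p^s$ gives $\mathcal{C}=\langle 0\rangle$, so $d_H=0$. Moreover $x$ is a unit of $R$ (as $x\cdot x^{2p^s-1}=\alpha$), so no nonzero monomial can lie in a proper ideal, which already forces $d_H\ge 2$ for $1\le\ell\le p^s-1$.

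The crucial step is the even/odd decomposition. Setting $y=x^2$ and $S=\mathbb{F}_{p^m}[y]/\langle y^{p^s}-\alpha\rangle$, every element of $R$ has a unique expression $c=c_0(x^2)+x\,c_1(x^2)$ with $c_0,c_1\in S$, and because even- and odd-degree monomials occupy disjoint coordinates we get $wt_H(c)=wt_H(c_0)+wt_H(c_1)$. Since $R=S\oplus xS$ as $S$-modules and $(x^2-\alpha_0)^\ell=(y-\alpha_0)^\ell\in S$, the ideal splits as $\mathcal{C}_\ell=I\oplus xI$, where $I=\langle(y-\alpha_0)^\ell\rangle\subseteq S$. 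Consequently $c\in\mathcal{C}_\ell$ if and only if $c_0,c_1\in I$, and therefore
\[
d_H(\mathcal{C}_\ell)=\min_{0\ne(c_0,c_1)\in I\times I}\bigl(wt_H(c_0)+wt_H(c_1)\bigr)=d_H(I).
\]
Thus the distance is exactly that of the length-$p^s$ repeated-root $\alpha$-constacyclic code $\langle(y-\alpha_0)^\ell\rangle$ over $\mathbb{F}_{p^m}$; in particular the factor $2$ in the length leaves the minimum distance unchanged, which explains why the stated formula coincides with the single-variable one.

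It then remains to prove the formula for $d_H(I)$. For the upper bounds I would exhibit explicit low-weight generators built from $(y-\alpha_0)^{p^k}=y^{p^k}-\alpha_0^{p^k}$: a product such as $(y^{p^{s-\gamma}}-\alpha_0^{p^{s-\gamma}})^{p^\gamma-1}(y^{p^{s-\gamma-1}}-\alpha_0^{p^{s-\gamma-1}})^{\Gamma}$ has $(y-\alpha_0)$-valuation equal to the top of the relevant range, hence lies in $I$, and by Lucas' theorem together with a base-$p$ argument its monomial supports do not overlap, so its weight is exactly $(\Gamma+1)p^{\gamma}$; the regimes $d_H=2$ and $d_H=\beta_0+2$ arise the same way from binomials and from $(\Gamma+1)$-term products. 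The matching lower bound is the genuinely hard part and the main obstacle: one must show that no nonzero multiple of $(y-\alpha_0)^\ell$ can have fewer nonzero terms than claimed. I would handle this by induction on $s$, dividing a putative minimal-weight codeword by the largest available binomial factor of the form $y^{p^{s-1}}-\alpha$ and tracking how the number of nonzero terms can drop under this reduction (equivalently, a careful carry analysis of $\ell$ written in base $p$). This term-counting induction is precisely the content of Dinh's length-$p^s$ distance theorem, and the even/odd reduction above transports it verbatim to length $2p^s$.
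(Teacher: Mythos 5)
The paper contains no proof of this statement at all: Theorem \ref{thm3} is quoted from \cite{dinh2020hamming} as a known result, so there is no internal argument to compare yours against. Your proposal, by contrast, is an actual derivation, and it is correct. The chain of reductions is sound: $x^{2p^s}-\alpha=(x^2-\alpha_0)^{p^s}$; $\alpha_0$ is a non-square because $z\mapsto z^{p^s}$ is a field automorphism of $\mathbb{F}_{p^m}$ preserving the subgroup of squares, so $x^2-\alpha_0$ is irreducible and the ambient ring is a chain ring with ideals $\mathcal{C}_\ell=\langle(x^2-\alpha_0)^\ell\rangle$; the even/odd splitting $R=S\oplus xS$ with $S=\mathbb{F}_{p^m}[y]/\langle y^{p^s}-\alpha\rangle$ gives $\mathcal{C}_\ell=I\oplus xI$ with $I=\langle(y-\alpha_0)^\ell\rangle$, and since even- and odd-indexed coordinates are disjoint, $wt_H(c_0+xc_1)=wt_H(c_0)+wt_H(c_1)$, whence $d_H(\mathcal{C}_\ell)=d_H(I)$. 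This cleanly transports the problem to the classical length-$p^s$ formula of Dinh, which you then invoke (with a plausible sketch of its proof). Deferring that last ingredient to a citation is entirely legitimate here — it is a more elementary result than the one being proved, and the paper itself defers the \emph{whole} statement to a citation. Note also that your reduction would not work verbatim for the symbol-pair analogue (Theorem \ref{thm11}), since pair-weight couples adjacent coordinates and is not additive across the even/odd split; that is where the extra factor of $2$ in the symbol-pair formula comes from.

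One valuable by-product of your argument: the length-$p^s$ formula you land on has fourth case
\[
p^s-p^{s-\gamma}+(\Gamma-1)p^{s-\gamma-1}+1\;\leq\;\ell\;\leq\;p^s-p^{s-\gamma}+\Gamma p^{s-\gamma-1},
\]
with $(\Gamma-1)$, not $(\Gamma+1)$ as printed in Theorem \ref{thm3}. The printed range is empty, since $(\Gamma+1)p^{s-\gamma-1}+1>\Gamma p^{s-\gamma-1}$, so the statement as displayed in the paper contains a typo (the same slip recurs in Theorems \ref{thm4}--\ref{thm6}, while Theorems \ref{thm7}, \ref{thm8} and \ref{thm11} carry the correct $(\Gamma-1)$). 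Your version is the correct one.
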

    The Hamming distance of Type A ideals $\langle 0 \rangle$ and $\langle 1 \rangle$ of $R_{\alpha_1+\alpha_{2}u+  \alpha_{3}v+\alpha_{4} uv}$ are obviously 0 and 1,  respectively. We calculate the Hamming distance of Type B, C and Type D codes in the following theorems.
\begin{theorem}\label{thm4}
    Let $\mathcal{C}=\langle u(x^2-\alpha_0)^\ell \rangle $,  where $0\leq  \ell \leq 2p^s-1$. Then the  Hamming distance of $\mathcal{C}$, $d_H(\mathcal{C})$ is given by
    \begin{center}
        $d_H(\mathcal{C})=$
        $\begin{cases}
            1  &\text{if}\quad 0\leq  \ell \leq p^s;\\
            (\Gamma+1)p^{\gamma}& \text{if} \quad 2p^s-p^{s-\gamma }+(\Gamma+1)p^{s-\gamma-1}+1\leq \ell\leq 2p^s-p^{s-\gamma}+\Gamma p^{s-\gamma-1}; \\
            &\text{where} \quad 1 \leq \Gamma \leq p-1, \quad \text{and} \quad 0 \leq \gamma \leq s-1.
        \end{cases}$
    \end{center}
    \end{theorem}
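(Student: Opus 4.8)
The plan is to peel the problem down one coefficient ring at a time: first from $\mathcal{R}$ to $\mathbb{F}_{p^m}+v\mathbb{F}_{p^m}$, and then, in the hard range $\ell>p^s$, from $\mathbb{F}_{p^m}+v\mathbb{F}_{p^m}$ to $\mathbb{F}_{p^m}$, so that the field computation in Theorem \ref{thm3} can be quoted. First I would note that every codeword of $\mathcal{C}=\langle u(x^2-\alpha_0)^\ell\rangle$ is of the form $u\,w(x)$ with $w(x)\in S:=\frac{(\mathbb{F}_{p^m}+v\mathbb{F}_{p^m})[x]}{\langle x^{2p^s}-(\alpha_1+\alpha_3 v)\rangle}$: since $u(a+bu+cv+duv)=ua+uvc$, multiplication by $u$ annihilates the $u$- and $uv$-parts of any multiplier and sends the modulus $x^{2p^s}-\alpha$ to $x^{2p^s}-(\alpha_1+\alpha_3 v)$. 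Because $u(a+cv)=0$ forces $a=c=0$, this multiplication is an $\mathbb{F}_{p^m}$-linear, support-preserving map, so $wt_H(u\,w)=wt_H(w)$ and hence $d_H(\mathcal{C})=d_H(\mathcal{C}')$ for $\mathcal{C}'=\langle(x^2-\alpha_0)^\ell\rangle\subseteq S$. I would then record the identity that drives everything: as $\alpha_0^{p^s}=\alpha_1$, in $S$ we have $(x^2-\alpha_0)^{p^s}=x^{2p^s}-\alpha_1=\alpha_3 v$, whence $(x^2-\alpha_0)^{2p^s}=\alpha_3^2 v^2=0$; thus $x^2-\alpha_0$ is nilpotent of index $2p^s$ and $S$ is the chain ring recalled in Section 3. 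Note that $\alpha_3\in\mathbb{F}^*_{p^m}$ in both non-square cases of Theorem \ref{thm1}, so $\alpha_3$ is a unit throughout.

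For $0\le\ell\le p^s$ the inequality $\ell\le p^s$ gives $\alpha_3 v=(x^2-\alpha_0)^{p^s}\in\langle(x^2-\alpha_0)^\ell\rangle=\mathcal{C}'$; since $\alpha_3$ is a unit, the constant $v$ is itself a codeword, and it has Hamming weight $1$. As $\mathcal{C}'\ne\{0\}$ and no nonzero code has distance $0$, this yields $d_H(\mathcal{C})=d_H(\mathcal{C}')=1$, the first branch.

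The substance lies in $p^s<\ell\le 2p^s-1$. Writing $\ell=p^s+\ell'$ with $1\le\ell'\le p^s-1$ and using $(x^2-\alpha_0)^\ell=\alpha_3 v\,(x^2-\alpha_0)^{\ell'}$, I would first show every codeword is divisible by $v$: reducing $\mathcal{C}'$ modulo $v$ lands in $\mathbb{F}_{p^m}[x]/\langle x^{2p^s}-\alpha_1\rangle=\mathbb{F}_{p^m}[x]/\langle(x^2-\alpha_0)^{p^s}\rangle$, where $(x^2-\alpha_0)^\ell=0$ because $\ell\ge p^s$. Then, exploiting $v^2=0$ and $v\,x^{2p^s}=\alpha_1 v$, I would identify $\mathcal{C}'$ with $v\cdot\langle(x^2-\alpha_0)^{\ell'}\rangle$ computed inside the field ring $\mathbb{F}_{p^m}[x]/\langle x^{2p^s}-\alpha_1\rangle$; multiplication by $v$ is again support-preserving, so $d_H(\mathcal{C}')$ equals the Hamming distance of the repeated-root constacyclic code $\langle(x^2-\alpha_0)^{\ell'}\rangle$ over $\mathbb{F}_{p^m}$. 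Since $\alpha$ non-square in $\mathcal{R}$ forces $\alpha_1$, and hence $\alpha_0$, to be a non-square in $\mathbb{F}_{p^m}$, the polynomial $x^2-\alpha_0$ is irreducible and Theorem \ref{thm3} applies verbatim at index $\ell'=\ell-p^s$.

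It then remains to translate Theorem \ref{thm3} at index $\ell'$ into the staircase of Theorem \ref{thm4} in $\ell$. The branches of Theorem \ref{thm3} fold into the single family $(\Gamma+1)p^\gamma$: the value $2$ on $1\le\ell'\le p^{s-1}$ and the value $\beta_0+2$ correspond to $\gamma=0$ (with $\Gamma=1$ and $\Gamma=\beta_0+1$ respectively), while the $\gamma\ge1$ values transfer directly, the shift $\ell=\ell'+p^s$ turning each interval $p^s-p^{s-\gamma}+\cdots\le\ell'\le\cdots$ into the interval $2p^s-p^{s-\gamma}+\cdots\le\ell\le\cdots$. I expect the main obstacle to be exactly this endpoint bookkeeping — checking that $\ell'\mapsto\ell'+p^s$ carries the partition of $\{1,\dots,p^s-1\}$ underlying Theorem \ref{thm3} precisely onto the partition of $\{p^s+1,\dots,2p^s-1\}$ in the statement — together with making the two support-preserving descents (by $u$, then by $v$) and the chain-ring identity $(x^2-\alpha_0)^{p^s}=\alpha_3 v$ fully rigorous, since the whole reduction rests on them.
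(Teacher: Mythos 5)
Your proposal is correct and takes essentially the same route as the paper's proof: both rest on the identity $u(x^2-\alpha_0)^{p^s}=\alpha_3 uv$, which for $\ell>p^s$ rewrites $\mathcal{C}$ as $uv$ times the field code $\langle (x^2-\alpha_0)^{\ell-p^s}\rangle\subseteq \mathbb{F}_{p^m}[x]/\langle x^{2p^s}-\alpha_1\rangle$ so that Theorem~\ref{thm3} applies, your version merely factoring this weight-preserving descent into two steps (by $u$ through $(\mathbb{F}_{p^m}+v\mathbb{F}_{p^m})[x]/\langle x^{2p^s}-(\alpha_1+\alpha_3 v)\rangle$, then by $v$) and supplying details the paper leaves implicit (the weight-one codeword $uv$ when $\ell\le p^s$, injectivity of the multiplication maps, and the interval bookkeeping). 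One remark: your endpoint translation implicitly corrects a misprint present in both Theorem~\ref{thm3} and the statement as given --- the interval lower bounds should involve $(\Gamma-1)p^{s-\gamma-1}$ rather than $(\Gamma+1)p^{s-\gamma-1}$ (compare Theorems~\ref{thm7} and~\ref{thm12}), since as written those intervals are empty.
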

    \begin{proof}
        \begin{itemize}
            \item \textbf{Case(i): }If $0\leq  \ell \leq p^s$ then $d_H(\mathcal{C})=1$.
            \item  \textbf{Case(ii): }If $p^s+1 \leq \ell \leq 2p^s-1$,  $\mathcal{C}=\langle u(x^2-\alpha_0)^\ell \rangle =\langle uv(x^2-\alpha_0)^\ell-{p^s }\rangle $. Thus,  $\mathcal{C}$ is precisely the $(\alpha_1 +\alpha_2 u+ \alpha_3 v + \alpha_4 uv)$-constacyclic code $\langle (x^2-\alpha_0)^{\ell-p^s} \rangle$ in $\frac{\mathbb{F}_{p^m}[x]}{\langle x^{2p^s}-(\alpha_1 +\alpha_2 u+ \alpha_3 v + \alpha_4 uv) \rangle}$ multiplied by $uv$. Therefore $d_H(\mathcal{C})= d_H(\langle (x^2-\alpha_0)^{\kappa}\rangle_{\mathbb{F}_{p^m}})$ and proof follows from Theorem \ref{thm3}.
        \end{itemize}
    \end{proof}
    \begin{theorem}\label{thm5}
        Let $\mathcal{C}=\langle (x^2-\alpha_0)^\ell +u(x^2-\alpha_0)^t z(x) \rangle $,  where $0\leq  \ell \leq 2p^s-1$,  $0\leq  t < \ell  $ and $z(x)$ is 0 or a unit. Then
        \begin{center}
        $d_H(\mathcal{C})=$
        $\begin{cases}
            1  &\text{if}\quad 0\leq  \Im \leq p^s;\\
            (\Gamma+1)p^{\gamma}& \text{if} \quad 2p^s-p^{s-\gamma }+(\Gamma+1)p^{s-\gamma-1}+1\leq \Im \leq 2p^s-p^{s-\gamma}+\Gamma p^{s-\gamma-1}; \\
            &\text{where} \quad 1 \leq \Gamma \leq p-1, \quad \text{and} \quad 0 \leq \gamma \leq s-1.
        \end{cases}$
    \end{center}
    \end{theorem}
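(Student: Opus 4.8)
The plan is to reduce the Type~C code to a Type~B code and then invoke Theorem~\ref{thm4}. Concretely, I would show that the Hamming distance of $\mathcal{C}$ is controlled entirely by its torsion, so that $d_H(\mathcal{C}) = d_H(\langle u(x^2-\alpha_0)^{\Im}\rangle)$; since the right-hand side is given by Theorem~\ref{thm4} with $\ell$ replaced by $\Im$, this is exactly the asserted formula.

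First I would identify the residue and torsion of $\mathcal{C}$. Applying $\psi_u$ to the generator sends $u$ to $0$, so $Res(\mathcal{C}) = \langle (x^2-\alpha_0)^{\ell}\rangle$; and by the very definition of $\Im$ as the least exponent with $u(x^2-\alpha_0)^{\Im}\in\mathcal{C}$, we have $Tor(\mathcal{C}) = \langle (x^2-\alpha_0)^{\Im}\rangle$ inside the residue ring $\frac{(\mathbb{F}_{p^m}+v\mathbb{F}_{p^m})[x]}{\langle x^{2p^s}-(\alpha_1+\alpha_3 v)\rangle}$. Because $u\bigl[(x^2-\alpha_0)^{\ell}+u(x^2-\alpha_0)^{t}z(x)\bigr] = u(x^2-\alpha_0)^{\ell}\in\mathcal{C}$ (using $u^2=0$), the element $u(x^2-\alpha_0)^{\ell}$ already lies in $\mathcal{C}$, whence $\Im\le\ell$.

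For the upper bound, the containment $u(x^2-\alpha_0)^{\Im}\in\mathcal{C}$ gives $\langle u(x^2-\alpha_0)^{\Im}\rangle\subseteq\mathcal{C}$, so $d_H(\mathcal{C})\le d_H(\langle u(x^2-\alpha_0)^{\Im}\rangle)$. For the lower bound I would take an arbitrary nonzero $c(x)\in\mathcal{C}$ and split into two cases. If $\psi_u(c)\neq 0$, then dropping the $u$- and $uv$-components of each coefficient can only reduce the number of nonzero coordinates, so $wt_H(c)\ge wt_H(\psi_u(c))\ge d_H(Res(\mathcal{C}))$. If $\psi_u(c)=0$, then $c=u\,c_1$ with $c_1\in Tor(\mathcal{C})$; since $u(a+bv)=0$ forces $a=b=0$, multiplication by $u$ preserves Hamming weight, and $wt_H(c)=wt_H(c_1)\ge d_H(Tor(\mathcal{C}))$. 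Hence $d_H(\mathcal{C})\ge\min\{d_H(Res(\mathcal{C})),\,d_H(Tor(\mathcal{C}))\}$.

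It then remains to observe that the minimum is attained at the torsion. Since $\Im\le\ell$, the ideals are nested as $\langle (x^2-\alpha_0)^{\Im}\rangle\supseteq\langle (x^2-\alpha_0)^{\ell}\rangle$, so the larger ideal $Res(\mathcal{C})$ has the larger minimum distance, giving $d_H(Tor(\mathcal{C}))\le d_H(Res(\mathcal{C}))$. Therefore $d_H(\mathcal{C})\ge d_H(Tor(\mathcal{C}))$, and combined with the upper bound, $d_H(\mathcal{C})=d_H(Tor(\mathcal{C}))=d_H(\langle u(x^2-\alpha_0)^{\Im}\rangle)$, the last equality holding because multiplication by $u$ is weight-preserving (and $\alpha_3\neq 0$ places us in the setting of Theorem~\ref{thm4}). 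Applying Theorem~\ref{thm4} with exponent $\Im$ then produces the displayed formula. I expect the main obstacle to be the lower bound — precisely, confirming that the torsion rather than the residue governs $d_H(\mathcal{C})$, which hinges on the monotonicity of the distance in the exponent together with the inequality $\Im\le\ell$; the weight-preservation under multiplication by $u$ and the identification of $Tor(\mathcal{C})$ are the supporting technical points.
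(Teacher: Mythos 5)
Your proposal is correct and takes essentially the same route as the paper: the same upper bound from $\langle u(x^2-\alpha_0)^{\Im}\rangle\subseteq\mathcal{C}$, and a lower bound obtained by splitting each nonzero codeword according to whether it is annihilated by $u$ --- your dichotomy $\psi_u(c)\neq 0$ versus $\psi_u(c)=0$ is exactly the paper's case split $ua(x)\neq 0$ versus $ua(x)=0$ --- leading to $d_H(\mathcal{C})=d_H(\langle u(x^2-\alpha_0)^{\Im}\rangle)$ and then to Theorem~\ref{thm4}. Your torsion/residue packaging (weight-preserving multiplication by $u$, plus the monotonicity $d_H(Tor(\mathcal{C}))\le d_H(Res(\mathcal{C}))$ coming from $\Im\le\ell$) is a cleaner abstraction of the paper's explicit expansion of $a(x)=b(x)\left[(x^2-\alpha_0)^\ell+u(x^2-\alpha_0)^t z(x)\right]$, but the underlying mechanism is identical.
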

    \begin{proof}
        Let $\Im$ be the smallest integer such that $u(x^2-\alpha_0)^\Im \in \mathcal{C}$. Then $d_H(\mathcal{C})\leq d_H(\langle u(x^2-\alpha_0)^\Im \rangle)$. Let $a(x) \in \mathcal{C}$. Then there exists $b(x)=\sum\limits_{\kappa=0}^{2p^s-1}(a_{0\kappa}x+b_{0\kappa})(x^2-\alpha_0)^{\kappa}+u\sum\limits_{\kappa=0}^{2p^s-1}(a_{1\kappa}x+b_{1\kappa})(x^2-\alpha_0)^{\kappa} \in R_{\alpha_1 , \alpha_2,  \alpha_3,  \alpha_4}$, where $a_{0\kappa},  b_{0\kappa},  a_{1\kappa},  b_{1\kappa} \in \mathbb{F}_{p^m}$ such that $a(x)=b(x)[(x^2-\alpha_0)^\ell+u(x^2-\alpha_0)^t z(x)]$. Thus
        \begin{align}\label{eqn1}
            a(x)=&
            \Bigg[
                \notag \sum\limits_{\kappa=0}^{2p^s-1}(a_{0\kappa}x+b_{0\kappa})(x^2-\alpha_0)^{\kappa}+u\sum\limits_{\kappa=0}^{2p^s-1}(a_{1\kappa}x+b_{1\kappa})(x^2-\alpha_0)^{\kappa}
			\Bigg]\\
			\notag& \times [(x^2-          \alpha_0)^\ell          +u(x^2-\alpha_0)^t z(x)] \\
			\notag=&(x^2-\alpha_0)^\ell \sum\limits_{\kappa=0}^{2p^s-1}(a_{0\kappa}x+b_{0\kappa})(x^2-\alpha_0)^{\kappa}\\
			\notag&+u(x^2-\alpha_0)^\ell \sum\limits_{\kappa=0}^{2p^s-1}(a_{1\kappa}x+b_{1\kappa})(x^2-\alpha_0)^{\kappa}\\
			&+u(x^2-\alpha_0)^t z(x)\sum\limits_{\kappa=0}^{2p^s-1}(a_{0\kappa}x+b_{0\kappa})(x^2-\alpha_0)^{\kappa}.
		\end{align}
  
Then 
\begin{align*}
    w_H(a(x))&\geq w_H(u a(x))\\
    &=w_H\Bigg(u (x^2-          \alpha_0)^\ell 
 \sum\limits_{\kappa=0}^{2p^s-1}(a_{0\kappa}x+b_{0\kappa})(x^2-\alpha_0)^{\kappa} \Bigg)\\
 &\geq d_H\Bigg(\Big
 \langle u (x^2-          \alpha_0)^\ell \Big \rangle \Bigg)\\
 &\geq d_H\Bigg(\Big
 \langle u (x^2-          \alpha_0)^\Im \Big \rangle \Bigg)\qquad (\text{since} \quad \Im \leq \ell)
\end{align*}

  Suppose $a(x)\neq 0$ and $ua(x)=0$, since nilpotency index of $(x^2-\alpha_0)$ is $3p^s$, $a_{0\kappa}=b_{0\kappa}=0$ for $0\leq \kappa \leq 2p^s-\ell -1$. From Equation \ref{eqn1}
		\begin{align*}
			a(x)=&2\beta u(x^2-\alpha_0)^{p^s}
			\sum\limits_{\kappa=0}^{\ell-1}(a_{0\kappa}x+b_{0\kappa})(x^2-\alpha_0)^{\kappa}\\
			&+u(x^2-\alpha_0)^\ell \sum\limits_{\kappa=0}^{2p^s-\ell-1}(a_{1\kappa}x+b_{1\kappa})(x^2-\alpha_0)^{\kappa}\\
			&+u(x^2-\alpha_0)^{2p^s+t-\ell} z(x)\sum\limits_{\kappa=0}^{\ell-1}(a_{0\kappa}x+b_{0\kappa})(x^2-\alpha_0)^{\kappa}.
		\end{align*}
		By Theorem \ref{thm1}, we have $a(x)\in \langle u(x^2-\alpha_0)^\Im \rangle$. Then $d_H(\langle u(x^2-\alpha_0)^\Im \rangle)\leq d_H(\mathcal{C}) $. Hence $d_H(\mathcal{C})= d_H(\langle u(x^2-\alpha_0)^\Im \rangle)$.  The proof follows from Theorem \ref{thm4}.
  
	\end{proof} 
	\begin{theorem}\label{thm6}
		Let $\mathcal{C}=\langle (x^2-\alpha_0)^\ell +u(x^2-\alpha_0)^t z(x),  u (x^2-\alpha_0)^{\mu} \rangle $,  where $0\leq  \ell \leq 2p^s-1$,  $0\leq  t < \ell  $ and $z(x)$ is 0 or $z(x)$ is a unit. Then the  Hamming distance of $\mathcal{C}$,  $d_H(\mathcal{C})$ is given by
		\begin{center}
        $d_H(\mathcal{C})=$
        $\begin{cases}
            1  &\text{if}\quad 0\leq  \mu \leq p^s;\\
            (\Gamma+1)p^{\gamma}& \text{if} \quad 2p^s-p^{s-\gamma }+(\Gamma+1)p^{s-\gamma-1}+1\leq \mu \leq 2p^s-p^{s-\gamma}+\Gamma p^{s-\gamma-1}; \\
            &\text{where} \quad 1 \leq \Gamma \leq p-1, \quad \text{and} \quad 0 \leq \gamma \leq s-1.
        \end{cases}$
    \end{center}
	\end{theorem}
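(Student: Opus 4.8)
The plan is to reduce the statement to Theorem \ref{thm4} by proving the single identity $d_H(\mathcal{C}) = d_H(\langle u(x^2-\alpha_0)^{\mu}\rangle)$; once this is in hand, replacing $\ell$ by $\mu$ in the formula of Theorem \ref{thm4} reproduces the claimed expression verbatim. The exponent $\mu$ is the correct one to use because it is precisely the least $j$ with $u(x^2-\alpha_0)^{j}\in\mathcal{C}$: the element $u(x^2-\alpha_0)^{\mu}$ is one of the two generators, and the torsion identity $Tor(\mathcal{C})=\langle(x^2-\alpha_0)^{\mu}\rangle$ recorded in Theorem \ref{thm1} (it is exactly what makes $\eta_{\mathcal{C}}=\vert Tor(\mathcal{C})\vert\,\vert Res(\mathcal{C})\vert=p^{2m(2p^s-\mu)}\,p^{2m(2p^s-\ell)}=p^{2m(4p^s-\ell-\mu)}$) shows that no smaller power of $u(x^2-\alpha_0)$ lies in $\mathcal{C}$.

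The upper bound is immediate: since $u(x^2-\alpha_0)^{\mu}$ generates a subideal of $\mathcal{C}$, we have $\langle u(x^2-\alpha_0)^{\mu}\rangle\subseteq\mathcal{C}$ and hence $d_H(\mathcal{C})\leq d_H(\langle u(x^2-\alpha_0)^{\mu}\rangle)$. For the lower bound I would argue exactly as in Theorem \ref{thm5}, using that $w_H(a(x))\geq w_H(u\,a(x))$ because the support of $u\,a(x)$ is contained in that of $a(x)$. Given a nonzero $a(x)\in\mathcal{C}$, write $a(x)=f(x)[(x^2-\alpha_0)^{\ell}+u(x^2-\alpha_0)^{t}z(x)]+g(x)\,u(x^2-\alpha_0)^{\mu}$; multiplying by $u$ and using $u^2=0$ collapses all but one term, leaving $u\,a(x)=u\,f(x)(x^2-\alpha_0)^{\ell}$, which lies in $\langle u(x^2-\alpha_0)^{\ell}\rangle\subseteq\langle u(x^2-\alpha_0)^{\mu}\rangle$ since $\mu<\Im\leq\ell$. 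When $u\,a(x)\neq 0$ this already yields $w_H(a(x))\geq w_H(u\,a(x))\geq d_H(\langle u(x^2-\alpha_0)^{\mu}\rangle)$.

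The case $u\,a(x)=0$ is the main obstacle, since the inequality above becomes vacuous. I would treat it by exploiting the decomposition $\mathcal{R}_{\alpha}=S\oplus uS$ over $S=\mathbb{F}_{p^m}+v\mathbb{F}_{p^m}$ with $u^2=0$: multiplication by $u$ annihilates exactly $u\mathcal{R}_{\alpha}$, so $u\,a(x)=0$ forces $a(x)=u\,w(x)$ for some $w(x)$ in $\frac{(\mathbb{F}_{p^m}+v\mathbb{F}_{p^m})[x]}{\langle x^{2p^s}-(\alpha_1+\alpha_3v)\rangle}$. Then $u\,w(x)=a(x)\in\mathcal{C}$ means by definition $w(x)\in Tor(\mathcal{C})=\langle(x^2-\alpha_0)^{\mu}\rangle$, whence $a(x)\in\langle u(x^2-\alpha_0)^{\mu}\rangle$ and again $w_H(a(x))\geq d_H(\langle u(x^2-\alpha_0)^{\mu}\rangle)$. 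Equivalently, one can substitute the resulting vanishing of the low-degree residue coefficients back into the expansion of $a(x)$ and invoke the nilpotency of $x^2-\alpha_0$, as in Theorem \ref{thm5}. Combining the two cases gives $d_H(\mathcal{C})\geq d_H(\langle u(x^2-\alpha_0)^{\mu}\rangle)$, so equality holds and the stated formula follows from Theorem \ref{thm4}.
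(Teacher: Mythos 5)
Your proposal is correct, and its skeleton coincides with the paper's proof: both reduce the theorem to the identity $d_H(\mathcal{C})=d_H(\langle u(x^2-\alpha_0)^{\mu}\rangle)$ and then cite Theorem \ref{thm4}, getting the upper bound from the inclusion $\langle u(x^2-\alpha_0)^{\mu}\rangle\subseteq\mathcal{C}$ and the lower bound from $w_H(a(x))\geq w_H(ua(x))$ together with a case split on whether $ua(x)$ vanishes. The genuine difference lies in the degenerate case $ua(x)=0$. The paper handles it by expanding $a(x)=b(x)[(x^2-\alpha_0)^{\ell}+u(x^2-\alpha_0)^{t}z(x)]+b^{\prime}(x)u(x^2-\alpha_0)^{\mu}$ coefficient by coefficient, arguing from the nilpotency index $3p^s$ of $x^2-\alpha_0$ that the residue coefficients $a_{0\kappa},b_{0\kappa}$ vanish in low degrees, rewriting $a(x)$ as an explicit $u$-multiple (a step that introduces an unexplained constant $\beta$), and then appealing to Theorem \ref{thm1} to conclude $a(x)\in\langle u(x^2-\alpha_0)^{\mu}\rangle$. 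You instead use the $S$-module decomposition $\mathcal{R}_{\alpha}=S\oplus uS$ (with $S=\mathbb{F}_{p^m}+v\mathbb{F}_{p^m}$) to conclude that $ua(x)=0$ forces $a(x)=uw(x)$, and you pin down $Tor(\mathcal{C})=\langle(x^2-\alpha_0)^{\mu}\rangle$ by comparing the codeword count $\eta_{\mathcal{C}}=p^{2m(4p^s-\ell-\mu)}$ of Theorem \ref{thm1} with $\vert Res(\mathcal{C})\vert=p^{2m(2p^s-\ell)}$; membership of $a(x)$ in $\langle u(x^2-\alpha_0)^{\mu}\rangle$ then follows directly from the definition of torsion. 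Your treatment of this case is cleaner and more self-justifying than the paper's coefficient manipulation, whose final invocation of Theorem \ref{thm1} is left vague; the cardinality argument you give is exactly the missing bridge, and it transfers verbatim to the analogous step in Theorems \ref{thm10}, \ref{thm14} and \ref{thm17}. Both routes reach the same conclusion, so this is a stylistic but substantive improvement rather than a divergence in strategy.
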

	\begin{proof}
		We have $u(x^2-\alpha_0)^\mu \in \mathcal{C}$. Then $d_H(\mathcal{C})\leq d_H(\langle u(x^2-\alpha_0)^\mu \rangle)$. Let $a(x) \in \mathcal{C}$. Then there exists $b(x)=\sum\limits_{\kappa=0}^{2p^s-1}(a_{0\kappa}x+b_{0\kappa})(x^2-\alpha_0)^{\kappa}+u\sum\limits_{\kappa=0}^{2p^s-1}(a_{1\kappa}x+b_{1\kappa})(x^2-\alpha_0)^{\kappa} \in R_{\alpha_1, \alpha_2,  \alpha_3,  \alpha_4} $,  where $a_{0\kappa},  b_{0\kappa},  a_{1\kappa},  b_{1\kappa} \in \mathbb{F}_{p^m}$ and $b^\prime(x)=\sum\limits_{\kappa=0}^{2p^s-1}(a^\prime_{0\kappa}x+b^\prime_{0\kappa})(x^2-\alpha_0)^{\kappa}+u\sum\limits_{\kappa=0}^{2p^s-1}(a^\prime_{1\kappa}x+b^\prime_{1\kappa})(x^2-\alpha_0)^{\kappa}$, where $a^\prime_{0\kappa},  b^\prime_{0\kappa},  a^\prime_{1\kappa},  b^\prime_{1\kappa} \in \mathbb{F}_{p^m}$, such that $a(x) =b(x)[(x^2-\alpha_0)^\ell+u(x^2-\alpha_0)^t z(x)]+b^\prime(x)u(x^2-\alpha_0)^\mu$. Thus
		\begin{align}\label{eqn2}
			\notag a(x)=&
			\Bigg[ \sum\limits_{\kappa=0}^{2p^s-1}(a_{0\kappa}x+b_{0\kappa})(x^2-\alpha_0)^{\kappa}+u\sum\limits_{\kappa=0}^{2p^s-1}(a_{1\kappa}x+b_{1\kappa})(x^2-\alpha_0)^{\kappa}
			\Bigg ] [(x^2-\alpha_0)^\ell +u(x^2-\alpha_0)^t z(x)]\\
			\notag &+ \Bigg[
				\sum\limits_{\kappa=0}^{2p^s-1}(a^\prime_{0\kappa}x+b^\prime_{0\kappa})(x^2-\alpha_0)^{\kappa}+u\sum\limits_{\kappa=0}^{2p^s-1}(a^\prime_{1\kappa}x+b^\prime_{1\kappa})(x^2-\alpha_0)^{\kappa}
			\Bigg] u(x^2-\alpha_0)^\mu\\
			\notag =&(x^2-\alpha_0)^\ell \sum\limits_{\kappa=0}^{2p^s-1}(a_{0\kappa}x+b_{0\kappa})(x^2-\alpha_0)^{\kappa}\\
			\notag &+u(x^2-\alpha_0)^\ell \sum\limits_{\kappa=0}^{2p^s-1}(a_{1\kappa}x+b_{1\kappa})(x^2-\alpha_0)^{\kappa}\\
			 \notag&+u(x^2-\alpha_0)^t z(x)\sum\limits_{\kappa=0}^{2p^s-1}(a_{0\kappa}x+b_{0\kappa})(x^2-\alpha_0)^{\kappa}\\
			&+u(x^2-\alpha_0)^\mu\sum\limits_{\kappa=0}^{2p^s-1}(a^\prime_{0\kappa}x+b^\prime_{0\kappa})(x^2-\alpha_0)^{\kappa}.
		\end{align}
  Then 
\begin{align*}
    w_H(a(x))&\geq w_H(u a(x))\\
    &=w_H\Bigg(u (x^2-          \alpha_0)^\ell 
 \sum\limits_{\kappa=0}^{2p^s-1}(a_{0\kappa}x+b_{0\kappa})(x^2-\alpha_0)^{\kappa} \Bigg)\\
 &\geq d_H\Bigg(\Big
 \langle u (x^2-          \alpha_0)^\ell \Big \rangle \Bigg)\\
 &\geq d_H\Bigg(\Big
 \langle u (x^2-          \alpha_0)^\mu \Big \rangle \Bigg),\qquad as \quad\mu <\Im \leq \ell
\end{align*}
  Suppose $a(x)\neq 0$ and $ua(x)=0$, since nilpotency index of $(x^2-\alpha_0)$ is $3p^s$, $a_{0\kappa}=b_{0\kappa}=0$ for $0\leq \kappa \leq 2p^s-\ell -1$. From Equation \ref{eqn2}
		\begin{align*}
			a(x)=&2\beta u(x^2-\alpha_0)^{p^s}
			\sum\limits_{\kappa=0}^{\ell-1}(a_{0\kappa}x+b_{0\kappa})(x^2-\alpha_0)^{\kappa}\\
			&+u(x^2-\alpha_0)^\ell \sum\limits_{\kappa=0}^{2p^s-\ell-1}(a_{1\kappa}x+b_{1\kappa})(x^2-\alpha_0)^{\kappa}\\
			&+u(x^2-\alpha_0)^{2p^s+t-\ell} z(x)\sum\limits_{\kappa=0}^{\ell-1}(a_{0\kappa}x+b_{0\kappa})(x^2-\alpha_0)^{\kappa}\\	&+u(x^2-\alpha_0)^\mu\sum\limits_{\kappa=0}^{2p^s-1}(a^\prime_{0\kappa}x+b^\prime_{0\kappa})(x^2-\alpha_0)^{\kappa}
		\end{align*}
		We have $\mu <\Im$ and by Theorem \ref{thm1},  $a(x)\in \langle u(x^2-\alpha_0)^\mu \rangle$. Then $d_H(\langle u(x^2-\alpha_0)^\mu \rangle) \leq d_H(\mathcal{C}) $.  Hence $d_H(\mathcal{C})= d_H(\langle u(x^2-\alpha_0)^\mu \rangle)$. The proof follows from Theorem \ref{thm4}.
	\end{proof}
	The Hamming distance of Type A ideals $\langle 0 \rangle$ and $\langle 1 \rangle$ of $R_{\alpha_1+ \alpha_{3}v+\alpha_{4} uv}$ are obviously 0 and 1,  respectively. We calculate the Hamming distance of Type B, C and Type D codes in the following theorems.

	\begin{theorem}\label{thm7}
		Let $\mathcal{C}=\langle u(x^2-\alpha_0)^\ell \rangle $,  where $0\leq  \ell \leq 2p^s-1$. Then the  Hamming distance of $\mathcal{C}$, $d_H(\mathcal{C})$ is given by
		\begin{center}
			$d_H(\mathcal{C})=$
			$\begin{cases}
				1  &\text{if}\quad 0\leq  \ell \leq p^s;\\
				(\Gamma+1)p^{\gamma}& \text{if} \quad 2p^s-p^{s-\gamma }+(\Gamma-1)p^{s-\gamma-1}+1\leq \ell\leq 2p^s-p^{s-\gamma}+\Gamma p^{s-\gamma-1}; \\
				&\text{where} \quad 1 \leq \Gamma \leq p-1, \quad \text{and} \quad 0 \leq \gamma \leq s-1.
			\end{cases}$
		\end{center}
	\end{theorem}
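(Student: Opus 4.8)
The plan is to reduce the computation to the finite-field case exactly as in the proof of Theorem~\ref{thm4}, since the algebraic identity driving that argument survives unchanged when $\alpha_2=0$. First I would record the governing relation in $\mathcal{R}_\alpha$ for $\alpha=\alpha_1+\alpha_3 v+\alpha_4 uv$. Because $\alpha_0^{p^s}=\alpha_1$ and the characteristic is $p$, we have $(x^2-\alpha_0)^{p^s}=x^{2p^s}-\alpha_1=\alpha_3 v+\alpha_4 uv$ in $\mathcal{R}_\alpha$; multiplying by $u$ and using $u^2=0$ gives $u(x^2-\alpha_0)^{p^s}=\alpha_3\,uv$, where $\alpha_3\in\mathbb{F}^*_{p^m}$ is a unit. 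This is the very identity used for Theorem~\ref{thm4}, so the distance analysis carries over verbatim. I would also note that the hypothesis that $\alpha$ is a non-square in $\mathcal{R}$ forces $\alpha_1$, and hence $\alpha_0$, to be a non-square in $\mathbb{F}_{p^m}$, so that $x^2-\alpha_0$ is irreducible and Theorem~\ref{thm3} is applicable.

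For $0\leq \ell\leq p^s$ I would multiply the generator by $(x^2-\alpha_0)^{p^s-\ell}$ to produce $u(x^2-\alpha_0)^{p^s}=\alpha_3\,uv\in\mathcal{C}$. As a coefficient vector over $\mathcal{R}$ this element is supported on a single coordinate, so it is a nonzero codeword of Hamming weight $1$; therefore $d_H(\mathcal{C})=1$.

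For $p^s+1\leq \ell\leq 2p^s-1$ I would factor $u(x^2-\alpha_0)^\ell=\alpha_3\,uv\,(x^2-\alpha_0)^{\ell-p^s}$, so that $\mathcal{C}=\langle uv(x^2-\alpha_0)^{\kappa}\rangle$ with $\kappa=\ell-p^s$ and $1\leq\kappa\leq p^s-1$. Since $uv$ annihilates $u$ and $v$ and $uv\cdot x^{2p^s}=\alpha_1\,uv$, multiplication by $uv$ identifies this ideal, in a Hamming-weight-preserving fashion, with the $\mathbb{F}_{p^m}$-constacyclic code $\langle(x^2-\alpha_0)^{\kappa}\rangle$ inside $\mathbb{F}_{p^m}[x]/\langle x^{2p^s}-\alpha_1\rangle$. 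Hence $d_H(\mathcal{C})=d_H(\langle(x^2-\alpha_0)^{\kappa}\rangle)$ and Theorem~\ref{thm3} supplies the value.

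The main obstacle is purely the bookkeeping of rewriting the three finite-field subcases of Theorem~\ref{thm3} as the single closed form claimed here. The key observation is that the values $2$ and $\beta_0+2$ of Theorem~\ref{thm3} are exactly the slice $\gamma=0$ of $(\Gamma+1)p^{\gamma}$: under $\kappa=\ell-p^s$ and $\gamma=0$ the asserted interval collapses to $(\Gamma-1)p^{s-1}+1\leq\kappa\leq\Gamma p^{s-1}$ with $d_H=\Gamma+1$, which yields $d_H=2$ for $\Gamma=1$ and $d_H=\beta_0+2$ for $\Gamma=\beta_0+1$, while the cases $\gamma\geq1$ reproduce the top case of Theorem~\ref{thm3} after the shift $\kappa=\ell-p^s$. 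I would finish by verifying the endpoints—$\kappa=p^s-1$ corresponds to $\gamma=s-1$, $\Gamma=p-1$, giving $d_H=p^s$—so that the intervals partition the range $p^s+1\leq\ell\leq 2p^s-1$ and the formula is complete.
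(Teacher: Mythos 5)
Your proposal is correct and follows essentially the same route as the paper: the paper proves Theorem \ref{thm7} by repeating the argument of Theorem \ref{thm4}, namely splitting at $\ell=p^s$, using the identity $u(x^2-\alpha_0)^{p^s}=\alpha_3\,uv$ to get a weight-one codeword when $0\leq\ell\leq p^s$, and rewriting $\mathcal{C}=\langle uv(x^2-\alpha_0)^{\ell-p^s}\rangle$ as $uv$ times a finite-field constacyclic code so that Theorem \ref{thm3} applies when $p^s+1\leq\ell\leq 2p^s-1$. Your write-up simply makes explicit the details the paper leaves implicit (the non-square transfer to $\alpha_0$, the weight-preserving identification under multiplication by $uv$, and the interval bookkeeping), all of which check out.
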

	\begin{proof}
 
It follows the same steps as Theorem \ref{thm4}

	\end{proof}
	\begin{theorem}\label{thm8}
		Let $\mathcal{C}=\langle (x^2-\alpha_0)^\ell +u(x^2-\alpha_0)^t z(x) \rangle $,  where $0\leq  \ell \leq 2p^s-1$,  $0\leq  t < \ell  $ and $z(x)$ is 0 or a unit. Then
\begin{center}
    $d_H(\mathcal{C})=$
    $\begin{cases}
        1  &\text{if}\quad 0\leq  \Im \leq p^s;\\
        (\Gamma+1)p^{\gamma}& \text{if} \quad 2p^s-p^{s-\gamma }+(\Gamma-1)p^{s-\gamma-1}+1\leq \Im \leq 2p^s-p^{s-\gamma}+\Gamma p^{s-\gamma-1}; \\
        &\text{where} \quad 1 \leq \Gamma \leq p-1, \quad \text{and} \quad 0 \leq \gamma \leq s-1.
    \end{cases}$
\end{center}
\end{theorem}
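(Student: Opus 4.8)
The plan is to follow the argument of Theorem~\ref{thm5} almost verbatim, the one essential change being that the explicit Type~B distance is read off at the end from Theorem~\ref{thm7} (the $\alpha_1+\alpha_3v+\alpha_4uv$ case) rather than from Theorem~\ref{thm4}. First I would let $\Im$ be the smallest integer with $u(x^2-\alpha_0)^{\Im}\in\mathcal{C}$; by the classification in Theorem~\ref{thm1}(3) this is exactly the invariant attached to the generator, and in each case listed there one has $\Im\leq\ell$. Since $u(x^2-\alpha_0)^{\Im}\in\mathcal{C}$ gives $\langle u(x^2-\alpha_0)^{\Im}\rangle\subseteq\mathcal{C}$, the inequality $d_H(\mathcal{C})\leq d_H(\langle u(x^2-\alpha_0)^{\Im}\rangle)$ is immediate, so the content lies in the reverse bound.

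For the reverse bound I would take an arbitrary nonzero $a(x)\in\mathcal{C}$ and write $a(x)=b(x)\big[(x^2-\alpha_0)^{\ell}+u(x^2-\alpha_0)^{t}z(x)\big]$, expanding $b(x)$ as in Equation~\eqref{eqn1}. If $ua(x)\neq 0$, then multiplying by $u$ annihilates every term that already carries a factor $u$, leaving $ua(x)=u(x^2-\alpha_0)^{\ell}\sum_{\kappa}(a_{0\kappa}x+b_{0\kappa})(x^2-\alpha_0)^{\kappa}$, a nonzero element of $\langle u(x^2-\alpha_0)^{\ell}\rangle$, so that
\[
w_H(a(x))\geq w_H(ua(x))\geq d_H(\langle u(x^2-\alpha_0)^{\ell}\rangle)\geq d_H(\langle u(x^2-\alpha_0)^{\Im}\rangle),
\]
the last step using $\Im\leq\ell$ and hence $\langle u(x^2-\alpha_0)^{\ell}\rangle\subseteq\langle u(x^2-\alpha_0)^{\Im}\rangle$.

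The delicate case, which I expect to be the main obstacle, is $a(x)\neq 0$ with $ua(x)=0$, and here the bookkeeping differs from Theorem~\ref{thm5} because of the different nilpotency. In $R_{\alpha_1+\alpha_3v+\alpha_4uv}$ one has $(x^2-\alpha_0)^{p^s}=\alpha_3v+\alpha_4uv$, so $x^2-\alpha_0$ is nilpotent of index $2p^s$ (not $3p^s$), while still $u(x^2-\alpha_0)^{j}=0$ precisely when $j\geq 2p^s$. The relation $ua(x)=0$ then forces $a_{0\kappa}=b_{0\kappa}=0$ for $0\leq\kappa\leq 2p^s-\ell-1$, and the surviving $u$-free part of $a(x)$ involves only factors $(x^2-\alpha_0)^{\ell+\kappa}$ with $\ell+\kappa\geq 2p^s$, all of which vanish; thus $a(x)$ collapses to a pure $u$-multiple. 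Comparing this form with the ideal description of Theorem~\ref{thm1}(3) yields $a(x)\in\langle u(x^2-\alpha_0)^{\Im}\rangle$, so again $w_H(a(x))\geq d_H(\langle u(x^2-\alpha_0)^{\Im}\rangle)$. Combining the two cases gives $d_H(\mathcal{C})\geq d_H(\langle u(x^2-\alpha_0)^{\Im}\rangle)$, hence equality.

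Finally I would substitute the values of $d_H(\langle u(x^2-\alpha_0)^{\Im}\rangle)$ furnished by Theorem~\ref{thm7}, with its exponent read as $\Im$, and the stated cases fall out directly. The one point I would treat with extra care is that, unlike in the $\alpha_2\neq 0$ situation, Theorem~\ref{thm1}(3) allows $\Im$ to exceed $p^s$ even when $z(x)=0$; I would therefore confirm that the reduction $d_H(\mathcal{C})=d_H(\langle u(x^2-\alpha_0)^{\Im}\rangle)$ holds over the whole range $0\leq\Im\leq 2p^s-1$ and that it meshes with the $(\Gamma-1)$-type interval endpoints of Theorem~\ref{thm7}.
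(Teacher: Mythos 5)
Your proof is correct and takes essentially the same route as the paper, whose entire proof of this statement is the single line ``It follows the same steps as Theorem~\ref{thm5}.'' Your spelled-out version supplies exactly the adjustments the paper leaves implicit: the nilpotency index of $x^2-\alpha_0$ dropping to $2p^s$ in $R_{\alpha_1+\alpha_3v+\alpha_4 uv}$ (so the $u$-free part of $a(x)$ vanishes entirely when $ua(x)=0$ and $a(x)$ lands in $\langle u(x^2-\alpha_0)^{\Im}\rangle$ with $\Im$ as in Theorem~\ref{thm1}(3)), the bound $\Im\leq\ell$, and reading the Type~B distance from Theorem~\ref{thm7} (whose range for $\Im$ is genuinely needed here, since $\Im$ can exceed $p^s$ in this case) rather than from Theorem~\ref{thm4}.
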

\begin{proof}
It follows the same steps as Theorem \ref{thm5}.
\end{proof}

\begin{theorem}\label{thm10}
    Let $\mathcal{C}=\langle (x^2-\alpha_0)^\ell +u(x^2-\alpha_0)^t z(x),  u (x^2-\alpha_0)^{\mu} \rangle $,  where $0\leq  \ell \leq 2p^s-1$,  $0\leq  t < \ell  $ and $z(x)$ is 0 or $z(x)$ is a unit. Then the  Hamming distance of $\mathcal{C}$,  $d_H(\mathcal{C})$ is given by
    
   \begin{center}
        $d_H(\mathcal{C})=$
        $\begin{cases}
            1  &\text{if}\quad 0\leq  \mu \leq p^s;\\
            (\Gamma+1)p^{\gamma}& \text{if} \quad 2p^s-p^{s-\gamma }+(\Gamma-1)p^{s-\gamma-1}+1\leq \mu \leq 2p^s-p^{s-\gamma}+\Gamma p^{s-\gamma-1}; \\
            &\text{where} \quad 1 \leq \Gamma \leq p-1, \quad \text{and} \quad 0 \leq \gamma \leq s-1.
        \end{cases}$
    \end{center}
\end{theorem}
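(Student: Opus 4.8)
The plan is to prove, exactly as in Theorem \ref{thm6}, that $d_H(\mathcal{C})=d_H(\langle u(x^2-\alpha_0)^\mu \rangle)$, and then to read off the stated values directly from Theorem \ref{thm7}. The upper bound is immediate: since $u(x^2-\alpha_0)^\mu$ is one of the two listed generators of $\mathcal{C}$, we have $\langle u(x^2-\alpha_0)^\mu \rangle \subseteq \mathcal{C}$, and hence $d_H(\mathcal{C}) \le d_H(\langle u(x^2-\alpha_0)^\mu \rangle)$.

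For the reverse inequality I would take an arbitrary nonzero $a(x)\in\mathcal{C}$, write $a(x)=b(x)[(x^2-\alpha_0)^\ell+u(x^2-\alpha_0)^t z(x)]+b'(x)\,u(x^2-\alpha_0)^\mu$ with $b(x),b'(x)$ expanded in the $(x^2-\alpha_0)$-adic basis, and expand $a(x)$ as in Equation \ref{eqn2}. Then I would split into two cases. If $ua(x)\neq 0$, multiplying by $u$ annihilates every term already carrying a factor $u$ (because $u^2=0$), so $ua(x)=u(x^2-\alpha_0)^\ell\sum_\kappa(a_{0\kappa}x+b_{0\kappa})(x^2-\alpha_0)^\kappa$, whence
\begin{equation*}
w_H(a(x))\ge w_H(ua(x))\ge d_H\!\left(\langle u(x^2-\alpha_0)^\ell\rangle\right)\ge d_H\!\left(\langle u(x^2-\alpha_0)^\mu\rangle\right),
\end{equation*}
the last step because $\mu<\Im\le\ell$ and $d_H(\langle u(x^2-\alpha_0)^j\rangle)$ is nondecreasing in $j$ by the containment $\langle u(x^2-\alpha_0)^{j+1}\rangle\subseteq\langle u(x^2-\alpha_0)^{j}\rangle$. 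If instead $a(x)\neq 0$ but $ua(x)=0$, I would use the nilpotency of $u(x^2-\alpha_0)$ to force $a_{0\kappa}=b_{0\kappa}=0$ for $0\le\kappa\le 2p^s-\ell-1$, rewrite the surviving part of $a(x)$, and invoke part (3) of Theorem \ref{thm1} together with $\mu<\Im$ to conclude $a(x)\in\langle u(x^2-\alpha_0)^\mu\rangle$, so that again $w_H(a(x))\ge d_H(\langle u(x^2-\alpha_0)^\mu\rangle)$. Combining the two cases yields $d_H(\mathcal{C})\ge d_H(\langle u(x^2-\alpha_0)^\mu\rangle)$, hence equality.

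The hard part will be the case $ua(x)=0$, and it is precisely where this theorem genuinely departs from Theorem \ref{thm6}. When $\alpha_2=0$ one has $(x^2-\alpha_0)^{p^s}=\alpha_3 v+\alpha_4 uv$, so $(x^2-\alpha_0)^{2p^s}=0$ and the nilpotency index of $(x^2-\alpha_0)$ drops from $3p^s$ (its value in Theorem \ref{thm6}) to $2p^s$. Consequently the cross-term $2\beta u(x^2-\alpha_0)^{p^s}\sum(\cdots)$ that survived in the proof of Theorem \ref{thm6} now vanishes entirely, and the residual expression for $a(x)$ collapses to the three terms $u(x^2-\alpha_0)^\ell(\cdots)$, $u(x^2-\alpha_0)^{2p^s+t-\ell}z(x)(\cdots)$ and $u(x^2-\alpha_0)^\mu(\cdots)$. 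The care needed is to verify, using the sharper value $\Im=\ell$ (if $z(x)=0$) or $\Im=\min\{\ell,\,2p^s+t-\ell\}$ (if $z(x)\neq0$) recorded in part (3) of Theorem \ref{thm1}, that each of these exponents is at least $\mu$, so that $a(x)$ indeed lies in $\langle u(x^2-\alpha_0)^\mu\rangle$. Once $d_H(\mathcal{C})=d_H(\langle u(x^2-\alpha_0)^\mu\rangle)$ is established, Theorem \ref{thm7} supplies the explicit values and completes the proof.
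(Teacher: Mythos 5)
Your proposal is correct and takes exactly the paper's route: the paper proves Theorem \ref{thm10} with the single remark that it ``follows the same steps as Theorem \ref{thm6},'' i.e.\ establishing $d_H(\mathcal{C})=d_H\left(\langle u(x^2-\alpha_0)^\mu\rangle\right)$ via the containment upper bound and the two-case lower bound ($ua(x)\neq 0$ versus $ua(x)=0$), then reading off the values from the Type B result (Theorem \ref{thm7}). In fact your write-up supplies precisely the details the paper suppresses --- that for $\alpha_2=0$ the nilpotency index of $(x^2-\alpha_0)$ drops to $2p^s$ so the cross-term $2\beta u(x^2-\alpha_0)^{p^s}(\cdots)$ vanishes, and that $\mu<\Im$ with $\Im$ as in part (3) of Theorem \ref{thm1} forces every surviving term into $\langle u(x^2-\alpha_0)^\mu\rangle$ --- so it is, if anything, more complete than the proof the paper records.
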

	\begin{proof}
 It follows the same steps as Theorem \ref{thm6}.
 \end{proof}

\section{Symbol-Pair Distance}
 For the class of  $(\alpha_1 +\alpha_2 u+  \alpha_3 v + \alpha_4 uv)$-constacyclic code of length $2p^s$ over $\mathcal{R}$  when  $\alpha=(\alpha_1 +\alpha_2 u+  \alpha_3 v + \alpha_4 uv)$ is not a square in $\mathbb{F}_{p^m}$ , we calculate the symbol- pair distances. Here, we revisit these findings from \cite{dinh2019symbol}.
    \begin{theorem}\cite{dinh2019symbol}\label{thm11}
        Let $\mathcal{C}=\langle (x^2-\alpha_0)^\ell \rangle $$\subseteq$ $\frac{\mathbb{F}_{p^m}[x]}{\langle x^{2p^s}-\alpha \rangle}$,  for $0 \leq \ell \leq p^s$.  Then the symbol-pair distance $d_{sp}(\mathcal{C})$ is completely given by
        \begin{center}
            $d_{sp}(\mathcal{C})=$
            $\begin{cases}
                2  &\text{if}\quad \ell=0;\\
                2(\Gamma+1)p^{\gamma}  & \text{if}\quad p^s-p^{s-\gamma}+(\Gamma-1)p^{s-\gamma-1}+1\leq \ell\leq p^s-p^{s-\gamma}+\Gamma p^{s-\gamma-1}\\
                & \quad where\quad 1 \leq \Gamma \leq p-1  \quad and \quad 1 \leq \gamma \leq s-1;\\
                 0  &\text{if}\quad \ell=p^s.
            \end{cases}$
        \end{center}
    \end{theorem}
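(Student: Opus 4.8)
The plan is to sidestep a direct codeword analysis and instead reduce everything to the Hamming distance already recorded in Theorem \ref{thm3}, by exploiting the even/odd splitting forced by the factor $x^2-\alpha_0$, and then to establish the clean identity $d_{sp}(\mathcal{C})=2\,d_H(\mathcal{C})$. First I would fix the ambient structure. Since $\alpha_0^{p^s}=\alpha$ and $\mathrm{char}=p$, we have $x^{2p^s}-\alpha=(x^2-\alpha_0)^{p^s}$; as $\alpha$, and hence $\alpha_0$ (the quadratic character being preserved by the Frobenius), is a non-square, $x^2-\alpha_0$ is irreducible over $\mathbb{F}_{p^m}$, so $\frac{\mathbb{F}_{p^m}[x]}{\langle x^{2p^s}-\alpha\rangle}$ is a chain ring with maximal ideal $\langle x^2-\alpha_0\rangle$ and nilpotency index $p^s$. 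Every $c\in\mathcal{C}$ has the form $c(x)=(x^2-\alpha_0)^\ell b(x)$; writing $b(x)=b_0(x^2)+xb_1(x^2)$ and setting $P(y)=(y-\alpha_0)^\ell b_0(y)$ and $Q(y)=(y-\alpha_0)^\ell b_1(y)$, the even part of $c$ is $P(x^2)$ and the odd part is $xQ(x^2)$. As $b$ runs over the ring, $(P,Q)$ runs over all of $\mathcal{D}_\ell\times\mathcal{D}_\ell$, where $\mathcal{D}_\ell=\langle (y-\alpha_0)^\ell\rangle\subseteq\frac{\mathbb{F}_{p^m}[y]}{\langle (y-\alpha_0)^{p^s}\rangle}$ is the length-$p^s$ repeated-root code governed by Theorem \ref{thm3}. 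Thus, as a length-$2p^s$ word, $c$ is the interleaving $c_{2j}=P_j$, $c_{2j+1}=Q_j$, so $w_H(c)=w_H(P)+w_H(Q)$ and $d_H(\mathcal{C})=d_H(\mathcal{D}_\ell)$.

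Next I would set up the symbol-pair bookkeeping. For any word $c$ of length $n=2p^s$ with support $S$ one has $w_{sp}(c)=|S\cup(S-1)|=2\,w_H(c)-a(c)$, where $a(c)=|\{i:c_i\neq 0,\ c_{i+1}\neq 0\}|$ counts adjacent nonzero pairs (indices mod $n$; the constacyclic wrap pair $(c_{n-1},\alpha c_0)$ is nonzero exactly when $(c_{n-1},c_0)$ is, because $\alpha$ is a unit). The decisive observation is that the interleaving places the even coordinates ($P$-part) in the positions $0,2,4,\dots$ and the odd coordinates ($Q$-part) in $1,3,5,\dots$, so any two even (or any two odd) coordinates are at distance $\ge 2$ and are never adjacent. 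Consequently every adjacent nonzero pair of $c$ couples one even coordinate $P_j$ with one odd coordinate $Q_j$ or $Q_{j-1}$, whence $a(c)\le 2\min\{w_H(P),w_H(Q)\}$, and $a(c)=0$ whenever $P=0$ or $Q=0$.

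Combining these gives, for every nonzero $c$,
\[
w_{sp}(c)=2\big(w_H(P)+w_H(Q)\big)-a(c)\ \ge\ 2\max\{w_H(P),w_H(Q)\}\ \ge\ 2\,d_H(\mathcal{D}_\ell),
\]
where the both-parts-nonzero case uses $\max\ge d_H(\mathcal{D}_\ell)$ and the single-part case uses $a(c)=0$. For the reverse inequality I would take a minimum Hamming-weight word of $\mathcal{D}_\ell$, place it on the even coordinates and set $Q=0$; its support is then pairwise non-adjacent, so $a(c)=0$ and $w_{sp}(c)=2\,d_H(\mathcal{D}_\ell)$. Hence $d_{sp}(\mathcal{C})=2\,d_H(\mathcal{C})$, and substituting the values of Theorem \ref{thm3} yields the stated formula, with $\ell=0$ and $\ell=p^s$ giving $2$ and $0$. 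The only genuine obstacle is conceptual rather than computational: one must verify that the substitution $y\mapsto x^2$ doubles every gap, so that no within-$P$ or within-$Q$ adjacency can survive. This is precisely what removes the $-1$ correction that occurs for length $p^s$ and produces the factor $2$ uniformly; the rest is routine substitution of the Hamming values.
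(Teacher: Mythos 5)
Your proposal is correct, but there is nothing internal to compare it with: the paper states this result as Theorem~\ref{thm11} and attributes it to \cite{dinh2019symbol} without reproducing a proof, so your argument stands as a self-contained derivation. Its core --- writing each codeword as $c(x)=P(x^2)+xQ(x^2)$ with $(P,Q)$ ranging over $\mathcal{D}_\ell\times\mathcal{D}_\ell$, where $\mathcal{D}_\ell=\langle(y-\alpha_0)^\ell\rangle\subseteq\mathbb{F}_{p^m}[y]/\langle(y-\alpha_0)^{p^s}\rangle$, noting that in a cyclic word of even length every adjacency joins an even and an odd position, and concluding $w_{sp}(c)=2\bigl(w_H(P)+w_H(Q)\bigr)-a(c)\ge 2\max\{w_H(P),w_H(Q)\}\ge 2\,d_H(\mathcal{D}_\ell)$, with equality realized by a minimum-weight $P$ and $Q=0$ --- is sound, and the resulting identity $d_{sp}(\mathcal{C})=2\,d_H(\mathcal{C})$ is exactly the content of the cited theorem. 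Three minor blemishes, none fatal: (i) $\mathcal{D}_\ell$ has length $p^s$, so it is not literally ``governed by Theorem~\ref{thm3}'' (which concerns length $2p^s$); what saves you is your own identity $d_H(\mathcal{C})=d_H(\mathcal{D}_\ell)$, after which Theorem~\ref{thm3} supplies the numerical values. (ii) The symbol-pair weight uses the plain cyclic wrap pair $(c_{n-1},c_0)$; no $\alpha$ enters there, so your parenthetical about $(c_{n-1},\alpha c_0)$ conflates the constacyclic shift with pair reading, harmlessly since $\alpha$ is a unit. (iii) Your identity determines $d_{sp}$ for every $0\le\ell\le p^s$, including the range $1\le\ell\le p^s-p^{s-1}$, which the statement as printed (restricting to $1\le\gamma\le s-1$) omits, and the final substitution requires correcting the misprinted lower bound in Theorem~\ref{thm3} (its $(\Gamma+1)p^{s-\gamma-1}+1$ should read $(\Gamma-1)p^{s-\gamma-1}+1$, as in Theorem~\ref{thm11}, since otherwise the stated ranges are empty); both discrepancies are transcription defects of this paper, not gaps in your argument.
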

    The symbol-pair distance of Type A ideals $\langle 0 \rangle$ and $\langle 1 \rangle$ of $R_{\alpha_1, \alpha_2,  \alpha_3, \alpha_4}$ are obviously 0 and 2,  respectively. We calculate the symbol-pair distance of Type B, C and Type D codes in the following theorems.
\begin{theorem}\label{thm12}
    Let $\mathcal{C}=\langle u(x^2-\alpha_0)^\ell \rangle $,  where $0\leq  \ell \leq 2p^s-1$. Then the  minimum symbol-pair distance of $\mathcal{C}$, $d_{sp}(\mathcal{C})$ is given by
    \begin{center}
        $d_{sp}(\mathcal{C})=$
        $\begin{cases}
            2  &\text{if}\quad  0 \leq \ell \leq p^s;\\
            2(\Gamma+1)p^{\gamma}& \text{if} \quad 2p^s-p^{s-\gamma }+(\Gamma-1)p^{s-\gamma-1}+1\leq \ell\leq 2p^s-p^{s-\gamma}+\Gamma p^{s-\gamma-1}; \\
            &\text{where} \quad 1 \leq \Gamma \leq p-1, \quad \text{and} \quad 0 \leq \gamma \leq s-1.
        \end{cases}$
    \end{center}
    \end{theorem}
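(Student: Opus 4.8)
The plan is to follow the two-case template of the Hamming computation in Theorem~\ref{thm4}, replacing each Hamming estimate by its symbol-pair counterpart. Everything is driven by one identity in $\mathcal{R}_{\alpha_1+\alpha_2 u+\alpha_3 v+\alpha_4 uv}$: since $\alpha_0^{p^s}=\alpha_1$ and the characteristic is $p$, we have $(x^2-\alpha_0)^{p^s}=x^{2p^s}-\alpha_1=\alpha_2 u+\alpha_3 v+\alpha_4 uv$, and multiplying by $u$ (using $u^2=0$) gives $u(x^2-\alpha_0)^{p^s}=\alpha_3 uv$, where $\alpha_3\neq 0$.

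For $0\leq\ell\leq p^s$ I would argue directly. Because $p^s-\ell\geq 0$, multiplying the generator by $(x^2-\alpha_0)^{p^s-\ell}$ shows $\alpha_3 uv\in\mathcal{C}$, so (scaling by $\alpha_3^{-1}$) the monomial $uv$ is a codeword of Hamming weight one. Such a word has exactly two nonzero consecutive pairs---the pair beginning at its support coordinate and the pair ending there, which are distinct since $2p^s\geq 2$---so its symbol-pair weight is $2$; and every nonzero codeword has at least one nonzero coordinate and hence symbol-pair weight at least $2$. Thus $d_{sp}(\mathcal{C})=2$.

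For $p^s+1\leq\ell\leq 2p^s-1$ I would use the identity to write $\mathcal{C}=\langle u(x^2-\alpha_0)^\ell\rangle=\langle uv(x^2-\alpha_0)^{\ell-p^s}\rangle$. The key structural observation is that $uv\,r=uv\,\bar r$ for every $r\in\mathcal{R}$, where $\bar r$ is the $\mathbb{F}_{p^m}$-component of $r$, and that $uv\,x^{2p^s}=\alpha_1 uv$; hence the map $f(x)\mapsto uv\,f(x)$ is an $\mathbb{F}_{p^m}$-linear bijection from the $\alpha_1$-constacyclic ideal $\langle(x^2-\alpha_0)^{\ell-p^s}\rangle$ of $\mathbb{F}_{p^m}[x]/\langle x^{2p^s}-\alpha_1\rangle$ onto $\mathcal{C}$ (the cardinalities agree, by Theorem~\ref{thm1}). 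Since this map multiplies every coordinate by the fixed nonzero element $uv$, it preserves the support of each codeword and is therefore simultaneously a Hamming and a symbol-pair isometry. Consequently $d_{sp}(\mathcal{C})=d_{sp}\!\left(\langle(x^2-\alpha_0)^{\ell-p^s}\rangle_{\mathbb{F}_{p^m}}\right)$, and I would close by applying Theorem~\ref{thm11} with $\ell'=\ell-p^s$, noting that $1\leq\ell'\leq p^s-1$ so neither endpoint value of Theorem~\ref{thm11} arises.

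The main obstacle is the final reindexing. I would verify that the substitution $\ell'=\ell-p^s$ carries the interval $2p^s-p^{s-\gamma}+(\Gamma-1)p^{s-\gamma-1}+1\leq\ell\leq 2p^s-p^{s-\gamma}+\Gamma p^{s-\gamma-1}$ exactly onto the interval of Theorem~\ref{thm11} on which $d_{sp}=2(\Gamma+1)p^\gamma$, so that the same pair $(\Gamma,\gamma)$ and the same value transfer verbatim. Particular care is needed for the band $\gamma=0$, corresponding to the smallest exponents $\ell'$ of the field code, which must be matched against the complete form of the cited result rather than only the ranges displayed here. Finally I would confirm that the two cases tile $0\leq\ell\leq 2p^s-1$ without overlap, which is immediate once the endpoint $\ell=p^s$ is placed in the first case.
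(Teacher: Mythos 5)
Your proposal is correct and takes essentially the same route as the paper's proof: case $0\leq\ell\leq p^s$ handled directly, and case $\ell>p^s$ handled via the identity $u(x^2-\alpha_0)^{p^s}=\alpha_3 uv$, rewriting $\mathcal{C}=\langle uv(x^2-\alpha_0)^{\ell-p^s}\rangle$ so that the symbol-pair distance transfers to the field code and Theorem \ref{thm11} applies. You in fact supply details the paper leaves implicit (the weight-one codeword $uv$ giving $d_{sp}=2$, the support-preserving isometry $f(x)\mapsto uv\,f(x)$, the reindexing check, and the $\gamma=0$ band that Theorem \ref{thm11} as displayed omits), so your write-up is strictly more complete.
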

    \begin{proof}
        \begin{itemize}
            \item \textbf{Case(i): }If $ 0 \leq \ell \leq p^s$ then $d_{sp}(\mathcal{C})=2$.
            \item  \textbf{Case(ii): }If $\quad 2p^s-p^{s-\gamma }+(\Gamma-1)p^{s-\gamma-1}+1\leq \ell\leq 2p^s-p^{s-\gamma}+\Gamma p^{s-\gamma-1}$,  $\mathcal{C}=\langle u(x^2-\alpha_0)^\ell \rangle =\langle uv(x^2-\alpha_0)^{\ell-p^s }\rangle $. Thus,  $\mathcal{C}$ is precisely the $(\alpha_1 +\alpha_2 u+ \alpha_3 v + \alpha_4 uv)$-constacyclic code $\langle (x^2-\alpha_0)^{\ell-p^s} \rangle$ in $\frac{\mathbb{F}_{p^m}[x]}{\langle x^{2p^s}-(\alpha_1 +\alpha_2 u+ \alpha_3 v + \alpha_4 uv) \rangle}$ multiplied by $uv$. Therefore $d_{sp}(\mathcal{C})= d_{sp}(\langle (x^2-\alpha_0)^{\kappa}\rangle_{\mathbb{F}_{p^m}})$ and proof follows from Theorem \ref{thm11}.
        \end{itemize}
    \end{proof}

    \begin{theorem}\label{thm13}
        Let $\mathcal{C}=\langle (x^2-\alpha_0)^\ell +u(x^2-\alpha_0)^t z(x) \rangle $,  where $0\leq  \ell \leq 2p^s-1$,  $0\leq  t < \ell  $ and $z(x)$ is 0 or a unit. Then
       \begin{center}
        $d_{sp}(\mathcal{C})=$
        $\begin{cases}
            2  &\text{if}\quad  0 \leq \Im \leq p^s;\\
            2(\Gamma+1)p^{\gamma}& \text{if} \quad 2p^s-p^{s-\gamma }+(\Gamma-1)p^{s-\gamma-1}+1\leq \Im\leq 2p^s-p^{s-\gamma}+\Gamma p^{s-\gamma-1}; \\
            &\text{where} \quad 1 \leq \Gamma \leq p-1, \quad \text{and} \quad 0 \leq \gamma \leq s-1.
        \end{cases}$
    \end{center}
    \end{theorem}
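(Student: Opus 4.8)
The plan is to reduce the computation to the already-settled principal case of Theorem \ref{thm12}. Let $\Im$ denote the smallest nonnegative integer for which $u(x^2-\alpha_0)^{\Im}\in\mathcal{C}$; this is exactly the invariant appearing in the statement. Multiplying the generator $(x^2-\alpha_0)^{\ell}+u(x^2-\alpha_0)^{t}z(x)$ by $u$ shows $u(x^2-\alpha_0)^{\ell}\in\mathcal{C}$, so $\Im\leq\ell$ and in particular $\langle u(x^2-\alpha_0)^{\Im}\rangle\subseteq\mathcal{C}$, giving $d_{sp}(\mathcal{C})\leq d_{sp}(\langle u(x^2-\alpha_0)^{\Im}\rangle)$. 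The whole theorem then follows once the reverse inequality $d_{sp}(\mathcal{C})\geq d_{sp}(\langle u(x^2-\alpha_0)^{\Im}\rangle)$ is established, because the right-hand side is computed explicitly by Theorem \ref{thm12} with $\ell$ replaced by $\Im$.

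For the reverse inequality I would take an arbitrary nonzero codeword $a(x)\in\mathcal{C}$ and show $w_{sp}(a(x))\geq d_{sp}(\langle u(x^2-\alpha_0)^{\Im}\rangle)$, mirroring the argument of Theorem \ref{thm5}. Writing $a(x)=b(x)[(x^2-\alpha_0)^{\ell}+u(x^2-\alpha_0)^{t}z(x)]$ exactly as in Equation \ref{eqn1} and multiplying through by $u$ kills every term carrying a factor of $u$ (since $u^2=0$ and $u\cdot uv=0$), leaving $ua(x)=u(x^2-\alpha_0)^{\ell}\sum_{\kappa}(a_{0\kappa}x+b_{0\kappa})(x^2-\alpha_0)^{\kappa}\in\langle u(x^2-\alpha_0)^{\ell}\rangle$. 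The observation that replaces the Hamming step $w_H(a)\geq w_H(ua)$ is that multiplication by $u$ cannot enlarge the symbol-pair support: if $uc_i\neq 0$ or $uc_{i+1}\neq 0$ then a fortiori $c_i\neq 0$ or $c_{i+1}\neq 0$, so every pair coordinate active in $ua(x)$ is already active in $a(x)$, whence $w_{sp}(a(x))\geq w_{sp}(ua(x))$.

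When $ua(x)\neq 0$ the chain closes immediately: $w_{sp}(a(x))\geq w_{sp}(ua(x))\geq d_{sp}(\langle u(x^2-\alpha_0)^{\ell}\rangle)\geq d_{sp}(\langle u(x^2-\alpha_0)^{\Im}\rangle)$, where the final step uses $\Im\leq\ell$ so that $\langle u(x^2-\alpha_0)^{\ell}\rangle\subseteq\langle u(x^2-\alpha_0)^{\Im}\rangle$ and a subcode has no smaller distance. The remaining case is $a(x)\neq 0$ with $ua(x)=0$: here, since the nilpotency index of $(x^2-\alpha_0)$ is $3p^s$, the residue coefficients $a_{0\kappa}=b_{0\kappa}=0$ for $0\leq\kappa\leq 2p^s-\ell-1$, and substituting this back into Equation \ref{eqn1} as in Theorem \ref{thm5} shows $a(x)\in\langle u(x^2-\alpha_0)^{\Im}\rangle$ by Theorem \ref{thm1}, so again $w_{sp}(a(x))\geq d_{sp}(\langle u(x^2-\alpha_0)^{\Im}\rangle)$. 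Combining the two cases yields the reverse inequality, hence $d_{sp}(\mathcal{C})=d_{sp}(\langle u(x^2-\alpha_0)^{\Im}\rangle)$, and Theorem \ref{thm12} finishes the proof.

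The step I expect to require the most care is the symbol-pair support inequality $w_{sp}(ua)\leq w_{sp}(a)$: for the Hamming weight it is transparent, but for symbol pairs one must argue at the level of consecutive coordinate pairs rather than single coordinates and verify that passing to $ua$ never creates a new active pair. Everything else is a faithful transcription of the Hamming proof of Theorem \ref{thm5}, with $w_H,d_H$ replaced by $w_{sp},d_{sp}$ and Theorem \ref{thm4} replaced by Theorem \ref{thm12}.
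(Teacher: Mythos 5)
Your proposal is correct and follows essentially the same route as the paper's own proof: both establish $d_{sp}(\mathcal{C})\leq d_{sp}(\langle u(x^2-\alpha_0)^{\Im}\rangle)$ from $\Im\leq\ell$, prove the reverse inequality via the chain $w_{sp}(a)\geq w_{sp}(ua)\geq d_{sp}(\langle u(x^2-\alpha_0)^{\ell}\rangle)\geq d_{sp}(\langle u(x^2-\alpha_0)^{\Im}\rangle)$, dispose of the case $a(x)\neq 0$, $ua(x)=0$ using the nilpotency index $3p^s$ to conclude $a(x)\in\langle u(x^2-\alpha_0)^{\Im}\rangle$, and then invoke Theorem \ref{thm12} for the explicit formula. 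Your explicit verification that multiplication by $u$ cannot create new active symbol pairs (so $w_{sp}(ua)\leq w_{sp}(a)$) fills in a step the paper asserts without comment, but this is a refinement of the same argument, not a different approach.
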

    \begin{proof}
        Let $\Im$ be the smallest integer such that $u(x^2-\alpha_0)^\Im \in \mathcal{C}$. Then $d_{sp}(\mathcal{C})\leq d_{sp}(\langle u(x^2-\alpha_0)^\Im \rangle)$. Let $a(x) \in \mathcal{C}$. Then there exists $b(x)=\sum\limits_{\kappa=0}^{2p^s-1}(a_{0\kappa}x+b_{0\kappa})(x^2-\alpha_0)^{\kappa}+u\sum\limits_{\kappa=0}^{2p^s-1}(a_{1\kappa}x+b_{1\kappa})(x^2-\alpha_0)^{\kappa} \in R_{\alpha_1 , \alpha_2,  \alpha_3,  \alpha_4}$, where $a_{0\kappa},  b_{0\kappa},  a_{1\kappa},  b_{1\kappa} \in \mathbb{F}_{p^m}$ such that $a(x)=b(x)[(x^2-\alpha_0)^\ell+u(x^2-\alpha_0)^t z(x)]$. Thus
        \begin{align}\label{eqn3}
           \notag a(x)=&
            \Bigg[ \sum\limits_{\kappa=0}^{2p^s-1}(a_{0\kappa}x+b_{0\kappa})(x^2-\alpha_0)^{\kappa}+u\sum\limits_{\kappa=0}^{2p^s-1}(a_{1\kappa}x+b_{1\kappa})(x^2-\alpha_0)^{\kappa}
			\Bigg]\\
			\notag & \times \Bigg[(x^2-          \alpha_0)^\ell          +u(x^2-\alpha_0)^t z(x)\Bigg] \\
		\notag	=&(x^2-\alpha_0)^\ell \sum\limits_{\kappa=0}^{2p^s-1}(a_{0\kappa}x+b_{0\kappa})(x^2-\alpha_0)^{\kappa}\\
		\notag	&+u(x^2-\alpha_0)^\ell \sum\limits_{\kappa=0}^{2p^s-1}(a_{1\kappa}x+b_{1\kappa})(x^2-\alpha_0)^{\kappa}\\
			&+u(x^2-\alpha_0)^t z(x)\sum\limits_{\kappa=0}^{2p^s-1}(a_{0\kappa}x+b_{0\kappa})(x^2-\alpha_0)^{\kappa}.
		\end{align}
Then 
\begin{align*}
    w_{sp}(a(x))&\geq w_{sp}(u a(x))\\
    &=w_{sp}\Bigg(u (x^2-          \alpha_0)^\ell 
 \sum\limits_{\kappa=0}^{2p^s-1}(a_{0\kappa}x+b_{0\kappa})(x^2-\alpha_0)^{\kappa} \Bigg)\\
 &\geq d_{sp}\Bigg(\Big
 \langle u (x^2-          \alpha_0)^\ell \Big \rangle \Bigg)\\
 &\geq d_{sp}\Bigg(\Big
 \langle u (x^2-          \alpha_0)^\Im \Big \rangle \Bigg),\qquad \text{as} \quad \Im \leq \ell.
\end{align*}

  Suppose $a(x)\neq 0$ and $ua(x)=0$, since nilpotency index of $(x^2-\alpha_0)$ is $3p^s$, $a_{0\kappa}=b_{0\kappa}=0$ for $0\leq \kappa \leq 2p^s-\ell -1$. From Equation \ref{eqn3}

		\begin{align*}
			a(x)=&2\beta u(x^2-\alpha_0)^{p^s}
			\sum\limits_{\kappa=0}^{\ell-1}(a_{0\kappa}x+b_{0\kappa})(x^2-\alpha_0)^{\kappa}\\
			&+u(x^2-\alpha_0)^\ell \sum\limits_{\kappa=0}^{2p^s-\ell-1}(a_{1\kappa}x+b_{1\kappa})(x^2-\alpha_0)^{\kappa}\\
			&+u(x^2-\alpha_0)^{2p^s+t-\ell} z(x)\sum\limits_{\kappa=0}^{\ell-1}(a_{0\kappa}x+b_{0\kappa})(x^2-\alpha_0)^{\kappa}.
		\end{align*}
		By Theorem \ref{thm1} we have $a(x)\in \langle u(x^2-\alpha_0)^\Im \rangle$. Then we get $d_{sp}(\langle u(x^2-\alpha_0)^\Im \rangle)\leq d_{sp}(\mathcal{C}) $. Hence $d_{sp}(\mathcal{C})= d_{sp}(\langle u(x^2-\alpha_0)^\Im \rangle)$.

	\end{proof} 
	\begin{theorem}\label{thm14}
		Let $\mathcal{C}=\langle (x^2-\alpha_0)^\ell +u(x^2-\alpha_0)^t z(x),  u (x^2-\alpha_0)^{\mu} \rangle $,  where $0\leq  \ell \leq 2p^s-1$,  $0\leq  t < \ell  $ and $z(x)$ is 0 or $z(x)$ is a unit. Then the  Symbol-Pair distance of $\mathcal{C}$,  $d_{sp}(\mathcal{C})$ is given by
		\begin{center}
        $d_{sp}(\mathcal{C})=$
        $\begin{cases}
            2  &\text{if}\quad  0 \leq \mu \leq p^s;\\
            2(\Gamma+1)p^{\gamma}& \text{if} \quad 2p^s-p^{s-\gamma }+(\Gamma-1)p^{s-\gamma-1}+1\leq \mu\leq 2p^s-p^{s-\gamma}+\Gamma p^{s-\gamma-1}; \\
            &\text{where} \quad 1 \leq \Gamma \leq p-1, \quad \text{and} \quad 0 \leq \gamma \leq s-1.
        \end{cases}$
    \end{center}
	\end{theorem}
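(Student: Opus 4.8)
The plan is to follow the same two-sided bounding argument used for the Hamming distance in Theorem \ref{thm6}, transporting it to the symbol-pair metric and reducing everything to the already-computed distance of the Type B generator $\langle u(x^2-\alpha_0)^\mu \rangle$ from Theorem \ref{thm12}. The whole point is that the symbol-pair distance of a Type D ideal is governed by the single parameter $\mu$, exactly as its Hamming distance was.

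First I would establish the upper bound. Since $u(x^2-\alpha_0)^\mu \in \mathcal{C}$ by construction of the Type D ideal, the single-generator code $\langle u(x^2-\alpha_0)^\mu \rangle$ sits inside $\mathcal{C}$, and therefore $d_{sp}(\mathcal{C}) \leq d_{sp}(\langle u(x^2-\alpha_0)^\mu \rangle)$. For the reverse inequality I would take an arbitrary nonzero $a(x) \in \mathcal{C}$ and write it through the two generators exactly as in Equation \eqref{eqn2}. The crucial observation is the monotonicity $w_{sp}(a(x)) \geq w_{sp}(ua(x))$: multiplying by the nilpotent $u$ can only annihilate symbol pairs, never create them, so the symbol-pair support of $ua(x)$ is contained in that of $a(x)$. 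Since $ua(x)$ reduces to $u(x^2-\alpha_0)^\ell$ times the residue part of $b(x)$, I would chain $w_{sp}(a(x)) \geq w_{sp}(ua(x)) \geq d_{sp}(\langle u(x^2-\alpha_0)^\ell \rangle) \geq d_{sp}(\langle u(x^2-\alpha_0)^\mu \rangle)$, where the last step uses $\mu < \Im \leq \ell$ together with the fact that $d_{sp}(\langle u(x^2-\alpha_0)^j \rangle)$ is non-decreasing in $j$ across the ranges recorded in Theorem \ref{thm12}.

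The boundary case, when $a(x) \neq 0$ but $ua(x) = 0$, is the step I expect to need the most care, since the chain above collapses. Here I would instead use that the nilpotency index of $(x^2-\alpha_0)$ is $3p^s$ to force $a_{0\kappa} = b_{0\kappa} = 0$ for $0 \leq \kappa \leq 2p^s - \ell - 1$, rewrite $a(x)$ in the reduced form displayed in the proof of Theorem \ref{thm6}, and then invoke the structure description in Theorem \ref{thm1} (using $\mu < \Im$) to conclude $a(x) \in \langle u(x^2-\alpha_0)^\mu \rangle$. This again gives $w_{sp}(a(x)) \geq d_{sp}(\langle u(x^2-\alpha_0)^\mu \rangle)$.

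Combining the two cases yields $d_{sp}(\mathcal{C}) \geq d_{sp}(\langle u(x^2-\alpha_0)^\mu \rangle)$, and together with the upper bound this gives $d_{sp}(\mathcal{C}) = d_{sp}(\langle u(x^2-\alpha_0)^\mu \rangle)$; the stated piecewise formula then follows verbatim from Theorem \ref{thm12}. The only genuinely new ingredient relative to the Hamming arguments is the monotonicity of $w_{sp}$ under multiplication by $u$, which I would justify at the level of symbol-pair supports; everything else is a transcription of Theorem \ref{thm6} with $d_H$ replaced by $d_{sp}$ and Theorem \ref{thm4} replaced by Theorem \ref{thm12}.
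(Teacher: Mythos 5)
Your proposal is correct and follows essentially the same route as the paper's own proof: the same upper bound from $u(x^2-\alpha_0)^\mu \in \mathcal{C}$, the same chain $w_{sp}(a(x)) \geq w_{sp}(ua(x)) \geq d_{sp}(\langle u(x^2-\alpha_0)^\ell \rangle) \geq d_{sp}(\langle u(x^2-\alpha_0)^\mu \rangle)$ using $\mu < \Im \leq \ell$, the same treatment of the degenerate case $ua(x)=0$ via the nilpotency index and Theorem \ref{thm1}, and the same final appeal to Theorem \ref{thm12}. Your explicit justification of the monotonicity $w_{sp}(a(x)) \geq w_{sp}(ua(x))$ at the level of symbol-pair supports is a small but welcome addition that the paper leaves implicit.
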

	\begin{proof}
		We have $u(x^2-\alpha_0)^\mu \in \mathcal{C}$. Then $d_{sp}(\mathcal{C})\leq d_{sp}(\langle u(x^2-\alpha_0)^\mu \rangle)$. Let $a(x) \in \mathcal{C}$. Then there exists $b(x)=\sum\limits_{\kappa=0}^{2p^s-1}(a_{0\kappa}x+b_{0\kappa})(x^2-\alpha_0)^{\kappa}+u\sum\limits_{\kappa=0}^{2p^s-1}(a_{1\kappa}x+b_{1\kappa})(x^2-\alpha_0)^{\kappa} \in R_{\alpha_1, \alpha_2,  \alpha_3,  \alpha_4} $,  where $a_{0\kappa},  b_{0\kappa},  a_{1\kappa},  b_{1\kappa} \in \mathbb{F}_{p^m}$ and $b^\prime(x)=\sum\limits_{\kappa=0}^{2p^s-1}(a^\prime_{0\kappa}x+b^\prime_{0\kappa})(x^2-\alpha_0)^{\kappa}+u\sum\limits_{\kappa=0}^{2p^s-1}(a^\prime_{1\kappa}x+b^\prime_{1\kappa})(x^2-\alpha_0)^{\kappa}$, where $a^\prime_{0\kappa},  b^\prime_{0\kappa},  a^\prime_{1\kappa},  b^\prime_{1\kappa} \in \mathbb{F}_{p^m}$, such that $a(x) =b(x)[(x^2-\alpha_0)^\ell+u(x^2-\alpha_0)^t z(x)]+b^\prime(x)u(x^2-\alpha_0)^\mu$. Thus
		\begin{align}\label{eqn4}
			\notag a(x)=&
			\Bigg[
				\sum\limits_{\kappa=0}^{2p^s-1}(a_{0\kappa}x+b_{0\kappa})(x^2-\alpha_0)^{\kappa}+u\sum\limits_{\kappa=0}^{2p^s-1}(a_{1\kappa}x+b_{1\kappa})(x^2-\alpha_0)^{\kappa}
			\Bigg] [(x^2-\alpha_0)^\ell +u(x^2-\alpha_0)^t z(x)]\\
		\notag 	&+ \Bigg[
				\sum\limits_{\kappa=0}^{2p^s-1}(a^\prime_{0\kappa}x+b^\prime_{0\kappa})(x^2-\alpha_0)^{\kappa}+u\sum\limits_{\kappa=0}^{2p^s-1}(a^\prime_{1\kappa}x+b^\prime_{1\kappa})(x^2-\alpha_0)^{\kappa}
			\Bigg] u(x^2-\alpha_0)^\mu\\
		\notag 	=&(x^2-\alpha_0)^\ell \sum\limits_{\kappa=0}^{2p^s-1}(a_{0\kappa}x+b_{0\kappa})(x^2-\alpha_0)^{\kappa}\\
		\notag 	&+u(x^2-\alpha_0)^\ell \sum\limits_{\kappa=0}^{2p^s-1}(a_{1\kappa}x+b_{1\kappa})(x^2-\alpha_0)^{\kappa}\\
			\notag &+u(x^2-\alpha_0)^t z(x)\sum\limits_{\kappa=0}^{2p^s-1}(a_{0\kappa}x+b_{0\kappa})(x^2-\alpha_0)^{\kappa}\\
			&+u(x^2-\alpha_0)^\mu\sum\limits_{\kappa=0}^{2p^s-1}(a^\prime_{0\kappa}x+b^\prime_{0\kappa})(x^2-\alpha_0)^{\kappa}.
		\end{align}
		 Then 
\begin{align*}
    w_{sp}(a(x))&\geq w_{sp}(u a(x))\\
    &=w_{sp}\Bigg(u (x^2-          \alpha_0)^\ell 
 \sum\limits_{\kappa=0}^{2p^s-1}(a_{0\kappa}x+b_{0\kappa})(x^2-\alpha_0)^{\kappa} \Bigg)\\
 &\geq d_{sp}\Bigg(\Big
 \langle u (x^2-          \alpha_0)^\ell \Big \rangle \Bigg)\\
 &\geq d_{sp}\Bigg(\Big
 \langle u (x^2-          \alpha_0)^\mu \Big \rangle \Bigg),\qquad as \quad\mu <\Im \leq \ell
\end{align*}
  Suppose $a(x)\neq 0$ and $ua(x)=0$, since nilpotency index of $(x^2-\alpha_0)$ is $3p^s$, $a_{0\kappa}=b_{0\kappa}=0$ for $0\leq \kappa \leq 2p^s-\ell -1$. From Equation \ref{eqn4}
		\begin{align*}
			a(x)=&2\beta u(x^2-\alpha_0)^{p^s}
			\sum\limits_{\kappa=0}^{\ell-1}(a_{0\kappa}x+b_{0\kappa})(x^2-\alpha_0)^{\kappa}\\
			&+u(x^2-\alpha_0)^\ell \sum\limits_{\kappa=0}^{2p^s-\ell-1}(a_{1\kappa}x+b_{1\kappa})(x^2-\alpha_0)^{\kappa}\\
			&+u(x^2-\alpha_0)^{2p^s+t-\ell} z(x)\sum\limits_{\kappa=0}^{\ell-1}(a_{0\kappa}x+b_{0\kappa})(x^2-\alpha_0)^{\kappa}\\	&+u(x^2-\alpha_0)^\mu\sum\limits_{\kappa=0}^{2p^s-1}(a^\prime_{0\kappa}x+b^\prime_{0\kappa})(x^2-\alpha_0)^{\kappa}
		\end{align*}
		We have $\mu <\Im$ and by Proposition \ref{thm1},  $a(x)\in \langle u(x^2-\alpha_0)^\mu \rangle$. Then $d_{sp}(\langle u(x^2-\alpha_0)^\mu \rangle) \leq d_{sp}(\mathcal{C}) $.  Hence $d_{sp}(\mathcal{C})= d_{sp}(\langle u(x^2-\alpha_0)^\mu \rangle)$. The proof follows from Theorem \ref{thm4}.
	\end{proof}
	The Symbol-Pair distance of Type A ideals $\langle 0 \rangle$ and $\langle 1 \rangle$ of $R_{\alpha_1+\alpha_{3}v+\alpha_{4} uv}$ are obviously 0 and 2,  respectively. We calculate the Symbol-Pair distance of Type B, C and Type D codes in the following theorems.

	\begin{theorem}\label{thm15}
		Let $\mathcal{C}=\langle u(x^2-\alpha_0)^\ell \rangle $,  where $0\leq  \ell \leq 2p^s-1$. Then the  Symbol-Pair distance of $\mathcal{C}$, $d_{sp}(\mathcal{C})$ is given by
		\begin{center}
        $d_{sp}(\mathcal{C})=$
        $\begin{cases}
            2  &\text{if}\quad  0 \leq \ell \leq p^s;\\
            2(\Gamma+1)p^{\gamma}& \text{if} \quad 2p^s-p^{s-\gamma }+(\Gamma-1)p^{s-\gamma-1}+1\leq \ell\leq 2p^s-p^{s-\gamma}+\Gamma p^{s-\gamma-1}; \\
            &\text{where} \quad 1 \leq \Gamma \leq p-1, \quad \text{and} \quad 0 \leq \gamma \leq s-1.
        \end{cases}$
    \end{center}
	\end{theorem}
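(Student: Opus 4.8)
The plan is to reproduce the argument of Theorem \ref{thm12} almost verbatim, since the statement is identical and only the ambient ring has changed from $R_{\alpha_1+\alpha_2 u+\alpha_3 v+\alpha_4 uv}$ to $R_{\alpha_1+\alpha_3 v+\alpha_4 uv}$ (the subcase $\alpha_2=0$). First I would split the range of $\ell$ into the two regimes of the statement. For $0\leq\ell\leq p^s$, the associated Hamming result (Theorem \ref{thm7}) guarantees a codeword of Hamming weight one; such a codeword has symbol-pair weight exactly $2$, and since every nonzero vector has symbol-pair weight at least $2$ (a single nonzero coordinate already meets two consecutive pairs), we get $d_{sp}(\mathcal{C})=2$ throughout this regime.

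For the second regime I would first establish the ideal identity $\mathcal{C}=\langle u(x^2-\alpha_0)^\ell\rangle=\langle uv(x^2-\alpha_0)^{\ell-p^s}\rangle$. The key ring computation is that, in characteristic $p$, $(x^2-\alpha_0)^{p^s}=x^{2p^s}-\alpha_0^{p^s}=\alpha-\alpha_1=\alpha_3 v+\alpha_4 uv$, whence $u(x^2-\alpha_0)^{p^s}=\alpha_3 uv$ because $u^2=0$ kills the remaining term. This is exactly the identity that drives Theorem \ref{thm12}; here it is even cleaner since no $\alpha_2 u$ term is present. As $\alpha_3\in\mathbb{F}^*_{p^m}$ is a unit, we may invert it to obtain $\mathcal{C}=\langle uv(x^2-\alpha_0)^{\kappa}\rangle$ with $\kappa=\ell-p^s\in[1,p^s-1]$, so that $\mathcal{C}$ is the $\mathbb{F}_{p^m}$-code $\langle(x^2-\alpha_0)^{\kappa}\rangle$ scaled by $uv$.

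The crux, and the step I expect to require the most care, is the transfer of the symbol-pair distance from the field code to $\mathcal{C}$. The $\mathbb{F}_{p^m}$-linear map $c(x)\mapsto uv\,c(x)$ is a bijection from $\langle(x^2-\alpha_0)^{\kappa}\rangle_{\mathbb{F}_{p^m}}$ onto $\mathcal{C}$, and since $uv\,c_i=0$ if and only if $c_i=0$ for each coordinate $c_i\in\mathbb{F}_{p^m}$, it preserves the support of every codeword. As the symbol-pair weight depends only on the support pattern, the map is a symbol-pair isometry and $d_{sp}(\mathcal{C})=d_{sp}(\langle(x^2-\alpha_0)^{\kappa}\rangle_{\mathbb{F}_{p^m}})$. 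Substituting the value supplied by Theorem \ref{thm11} for this $\kappa$ and rewriting the defining inequalities in terms of $\ell=\kappa+p^s$ reproduces the stated piecewise formula; the only genuinely fiddly part is checking that this re-indexing carries the range of Theorem \ref{thm11} onto the range $2p^s-p^{s-\gamma}+(\Gamma-1)p^{s-\gamma-1}+1\leq\ell\leq 2p^s-p^{s-\gamma}+\Gamma p^{s-\gamma-1}$ appearing here, which completes the proof.
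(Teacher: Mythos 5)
Your proposal is correct and follows essentially the same route as the paper: the paper's proof of this theorem just says it follows the same steps as Theorem \ref{thm12}, namely the case split, the identity $u(x^2-\alpha_0)^{p^s}=\alpha_3 uv$ (so $\mathcal{C}=\langle uv(x^2-\alpha_0)^{\ell-p^s}\rangle$ in the second regime), and the transfer to the field code via Theorem \ref{thm11}. Your explicit justification that multiplication by $uv$ is support-preserving, hence a symbol-pair isometry, is a detail the paper leaves implicit, but it is the same argument.
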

	\begin{proof}
It follows the same steps as Theorem \ref{thm12}.
	\end{proof}
	\begin{theorem}\label{thm16}
		Let $\mathcal{C}=\langle (x^2-\alpha_0)^\ell +u(x^2-\alpha_0)^t z(x) \rangle $,  where $0\leq  \ell \leq 2p^s-1$,  $0\leq  t < \ell  $ and $z(x)$ is 0 or a unit. Then
\begin{center}
        $d_{sp}(\mathcal{C})=$
        $\begin{cases}
            2  &\text{if}\quad  0 \leq \Im \leq p^s;\\
            2(\Gamma+1)p^{\gamma}& \text{if} \quad 2p^s-p^{s-\gamma }+(\Gamma-1)p^{s-\gamma-1}+1\leq \Im\leq 2p^s-p^{s-\gamma}+\Gamma p^{s-\gamma-1}; \\
            &\text{where} \quad 1 \leq \Gamma \leq p-1, \quad \text{and} \quad 0 \leq \gamma \leq s-1.
        \end{cases}$
    \end{center}
\end{theorem}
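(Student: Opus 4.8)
The statement is formally identical to Theorem \ref{thm13}, and only the ambient ring changes (here $\alpha=\alpha_1+\alpha_3 v+\alpha_4 uv$ with $\alpha_2=0$ in place of the first-kind unit), so the plan is to run the argument of Theorem \ref{thm13} with the value of $\Im$ appropriate to this ring substituted in. First I would set $\Im$ to be the smallest integer with $u(x^2-\alpha_0)^{\Im}\in\mathcal{C}$; this is exactly the invariant computed in Case 3 of Theorem \ref{thm1}. Since multiplying the generator by $u$ yields $u(x^2-\alpha_0)^{\ell}\in\mathcal{C}$ (the $z(x)$ term dies because $u^2=0$), we have $\Im\le\ell$, and containment immediately gives the upper bound $d_{sp}(\mathcal{C})\le d_{sp}(\langle u(x^2-\alpha_0)^{\Im}\rangle)$.

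For the reverse inequality I would take an arbitrary nonzero $a(x)\in\mathcal{C}$ and write $a(x)=b(x)\big[(x^2-\alpha_0)^{\ell}+u(x^2-\alpha_0)^{t}z(x)\big]$, expanded exactly as in Equation \ref{eqn3}. Multiplying through by $u$ annihilates every term already carrying a factor $u$, leaving $ua(x)=u(x^2-\alpha_0)^{\ell}\sum_{\kappa}(a_{0\kappa}x+b_{0\kappa})(x^2-\alpha_0)^{\kappa}$, and because the symbol-pair weight cannot increase under multiplication by $u$, I would chain
\begin{equation*}
w_{sp}(a(x))\ge w_{sp}(ua(x))\ge d_{sp}(\langle u(x^2-\alpha_0)^{\ell}\rangle)\ge d_{sp}(\langle u(x^2-\alpha_0)^{\Im}\rangle),
\end{equation*}
where the last inequality uses $\Im\le\ell$ together with the fact, visible from Theorem \ref{thm15}, that $d_{sp}(\langle u(x^2-\alpha_0)^{j}\rangle)$ is nondecreasing in $j$.

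The remaining case is $a(x)\neq 0$ with $ua(x)=0$. Here I would exploit the nilpotency structure of $(x^2-\alpha_0)$ in this ring to force $a_{0\kappa}=b_{0\kappa}=0$ for $0\le\kappa\le 2p^s-\ell-1$, rewrite $a(x)$ in the reduced form appearing in the proof of Theorem \ref{thm13}, and invoke Theorem \ref{thm1} to place $a(x)\in\langle u(x^2-\alpha_0)^{\Im}\rangle$; this again yields $w_{sp}(a(x))\ge d_{sp}(\langle u(x^2-\alpha_0)^{\Im}\rangle)$. Combining the two cases gives $d_{sp}(\mathcal{C})=d_{sp}(\langle u(x^2-\alpha_0)^{\Im}\rangle)$, and feeding this $\Im$ into Theorem \ref{thm15} produces the displayed piecewise formula.

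The main obstacle is the degenerate case $ua(x)=0$: the reduction of $a(x)$ and the verification that it lands inside $\langle u(x^2-\alpha_0)^{\Im}\rangle$ must be carried out with the second-kind nilpotency data, where setting $\alpha_2=0$ alters both the nilpotency behaviour of $(x^2-\alpha_0)$ and the value of $\Im$ relative to the first-kind case of Theorem \ref{thm13}. Once that bookkeeping is checked, everything else is a transcription of the earlier argument.
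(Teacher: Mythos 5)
Your proposal is correct and matches the paper's approach exactly: the paper's entire proof of this theorem is the remark that it follows the same steps as Theorem \ref{thm13}, which is precisely the argument you transcribe (upper bound via $u(x^2-\alpha_0)^{\Im}\in\mathcal{C}$, the chain $w_{sp}(a(x))\ge w_{sp}(ua(x))\ge d_{sp}(\langle u(x^2-\alpha_0)^{\ell}\rangle)\ge d_{sp}(\langle u(x^2-\alpha_0)^{\Im}\rangle)$, the separate case $ua(x)=0$, and then Theorem \ref{thm15} for the closed formula). Your explicit flagging of the changed nilpotency data when $\alpha_2=0$ (nilpotency index $2p^s$ rather than $3p^s$, and the different formula for $\Im$ from Theorem \ref{thm1}) is in fact more careful than the paper, which leaves that bookkeeping implicit.
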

\begin{proof}
It follows the same steps as Theorem \ref{thm13}.
\end{proof}

\begin{theorem}\label{thm17}
    Let $\mathcal{C}=\langle (x^2-\alpha_0)^\ell +u(x^2-\alpha_0)^t z(x),  u (x^2-\alpha_0)^{\mu} \rangle $,  where $0\leq  \ell \leq 2p^s-1$,  $0\leq  t < \ell  $ and $z(x)$ is 0 or $z(x)$ is a unit. Then the  Symbol-Pair distance of $\mathcal{C}$,  $d_{sp}(\mathcal{C})$ is given by
    
   \begin{center}
        $d_{sp}(\mathcal{C})=$
        $\begin{cases}
            2  &\text{if}\quad  0 \leq \mu \leq p^s;\\
            2(\Gamma+1)p^{\gamma}& \text{if} \quad 2p^s-p^{s-\gamma }+(\Gamma-1)p^{s-\gamma-1}+1\leq \mu\leq 2p^s-p^{s-\gamma}+\Gamma p^{s-\gamma-1}; \\
            &\text{where} \quad 1 \leq \Gamma \leq p-1, \quad \text{and} \quad 0 \leq \gamma \leq s-1.
        \end{cases}$
    \end{center}
\end{theorem}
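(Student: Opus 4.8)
The plan is to mirror the proof of Theorem \ref{thm14} verbatim, the only change being that the relevant threshold $\Im$ is now read off from part 3 of Theorem \ref{thm1} (the $\alpha_1+\alpha_3 v+\alpha_4 uv$ case) rather than part 2. The whole argument reduces to establishing the single identity $d_{sp}(\mathcal{C})= d_{sp}(\langle u(x^2-\alpha_0)^{\mu}\rangle)$, after which the explicit piecewise formula is exactly the statement of Theorem \ref{thm15} with $\ell$ replaced by $\mu$. The upper bound is immediate: since $u(x^2-\alpha_0)^{\mu}$ is literally one of the two generators of $\mathcal{C}$, we have $\langle u(x^2-\alpha_0)^{\mu}\rangle\subseteq\mathcal{C}$, hence $d_{sp}(\mathcal{C})\leq d_{sp}(\langle u(x^2-\alpha_0)^{\mu}\rangle)$.

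For the reverse inequality I would take an arbitrary nonzero $a(x)\in\mathcal{C}$, write it as $a(x)=b(x)[(x^2-\alpha_0)^{\ell}+u(x^2-\alpha_0)^{t}z(x)]+b^\prime(x)u(x^2-\alpha_0)^{\mu}$ with $b,b^\prime$ expanded in the $(x^2-\alpha_0)$-adic basis, and expand exactly as in Equation \ref{eqn4}. The key move is to pass to $u a(x)$, which annihilates every term carrying a factor of $u$ and leaves only the leading contribution, so that $w_{sp}(a(x))\geq w_{sp}(u a(x))=w_{sp}\bigl(u(x^2-\alpha_0)^{\ell}\sum_{\kappa}(a_{0\kappa}x+b_{0\kappa})(x^2-\alpha_0)^{\kappa}\bigr)$. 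When this is nonzero it lies in $\langle u(x^2-\alpha_0)^{\ell}\rangle$, and because $\mu<\Im\leq\ell$ we have $\langle u(x^2-\alpha_0)^{\ell}\rangle\subseteq\langle u(x^2-\alpha_0)^{\mu}\rangle$, giving $w_{sp}(a(x))\geq d_{sp}(\langle u(x^2-\alpha_0)^{\mu}\rangle)$.

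The remaining case, $a(x)\neq 0$ but $u a(x)=0$, is where the actual work lies and is the main obstacle. Here I would invoke that the nilpotency index of $(x^2-\alpha_0)$ in the ambient ring is $3p^s$, which forces $a_{0\kappa}=b_{0\kappa}=0$ for $0\leq\kappa\leq 2p^s-\ell-1$. Substituting this back collapses the expansion to a $u$-multiple supported only on powers of $(x^2-\alpha_0)$ at least $\mu$, and the structural classification of Theorem \ref{thm1} then certifies $a(x)\in\langle u(x^2-\alpha_0)^{\mu}\rangle$ precisely because $\mu<\Im$. The subtle point is that $\Im$ here is the case-three value $\Im=\ell$ (when $z(x)=0$) or $\Im=\min\{\ell,2p^s+t-\ell\}$ (when $z(x)\neq 0$), so one must check that the strict inequality $\mu<\Im$ still lands the collapsed $a(x)$ inside the target monomial ideal; this is the only place the $\alpha_1+\alpha_3 v+\alpha_4 uv$ setting enters.

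Combining the two cases yields $w_{sp}(a(x))\geq d_{sp}(\langle u(x^2-\alpha_0)^{\mu}\rangle)$ for every nonzero $a(x)\in\mathcal{C}$, hence $d_{sp}(\mathcal{C})\geq d_{sp}(\langle u(x^2-\alpha_0)^{\mu}\rangle)$, and together with the upper bound this forces equality. Applying Theorem \ref{thm15} to $\langle u(x^2-\alpha_0)^{\mu}\rangle$ delivers the claimed formula. Since the conclusion is phrased entirely in terms of $\mu$, the only genuine verification beyond Theorem \ref{thm14} is the membership step in the torsion case under the revised value of $\Im$; everything else transfers unchanged.
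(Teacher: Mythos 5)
Your proposal takes exactly the paper's route: the paper's own proof of this theorem is the one-line remark that it ``follows the same steps as Theorem \ref{thm14}'', and your plan---prove $d_{sp}(\mathcal{C})=d_{sp}(\langle u(x^2-\alpha_0)^{\mu}\rangle)$ by the two inequalities (upper bound from $u(x^2-\alpha_0)^{\mu}\in\mathcal{C}$, lower bound by passing to $ua(x)$ and handling the torsion case separately) and then quote the Type B formula of Theorem \ref{thm15}---is precisely that argument, including the correct observation that $\Im$ must now be taken from part 3 of Theorem \ref{thm1}.

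One slip should be corrected: in the ambient ring $R_{\alpha_1+\alpha_3 v+\alpha_4 uv}$ (where $\alpha_2=0$) the nilpotency index of $(x^2-\alpha_0)$ is $2p^s$, not $3p^s$, since $(x^2-\alpha_0)^{p^s}=x^{2p^s}-\alpha_0^{p^s}=\alpha_3 v+\alpha_4 uv$ and this element squares to zero. This misstatement does not break your proof, because the only facts actually used are that $u(x^2-\alpha_0)^{j}=0$ exactly when $j\geq 2p^s$ and that the elements $u(x^2-\alpha_0)^{j},\, ux(x^2-\alpha_0)^{j}$ for $0\leq j<2p^s$ are $\mathbb{F}_{p^m}$-linearly independent; both hold here, so the deduction $a_{0\kappa}=b_{0\kappa}=0$ for $0\leq\kappa\leq 2p^s-\ell-1$ survives. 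Indeed the torsion case is \emph{simpler} than in Theorem \ref{thm14}: the term $2\beta u(x^2-\alpha_0)^{p^s}\sum_{\kappa}(a_{0\kappa}x+b_{0\kappa})(x^2-\alpha_0)^{\kappa}$, which in the four-term case arises from $(x^2-\alpha_0)^{2p^s}=2\alpha_2\alpha_3 uv\neq 0$, vanishes identically when $\alpha_2=0$, so the collapsed $a(x)$ contains only the terms with exponents $\ell$, $2p^s+t-\ell$ and $\mu$, each at least $\mu$ because $\mu<\Im\leq\min\{\ell,2p^s+t-\ell\}$ (or $\Im=\ell$ when $z(x)=0$), giving the membership $a(x)\in\langle u(x^2-\alpha_0)^{\mu}\rangle$ exactly as you claim.
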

	\begin{proof}
 It follows the same steps as Theorem \ref{thm14}.
	\end{proof}
\begin{example}
Let $p=3$, $m=1$, $s=1$ and $\alpha=2+v+uv$. It is easy to check that 2 is not a square in $\mathbb{F}_{3}$. We determine all Hamming and Symbol-Pair distance of $(2+v+uv)-$Constacyclic code of length 6 over $\mathbb{F}_{3} + u\mathbb{F}_{3} + v\mathbb{F}_{3} +uv\mathbb{F}_{3}$ in Table \ref{Tab1}

\begin{table}[H]
\centering
\caption{Hamming and Symbol-Pair distance of $(2+v+uv)-$Constacyclic code of length 6 over $\mathbb{F}_{3} + u\mathbb{F}_{3} + v\mathbb{F}_{3} +uv\mathbb{F}_{3}$}
\label{Tab1}
\begin{tabular}{|c|c|c|c|}
\hline              
Ideal($\mathcal{C})$&$\eta_{\mathcal{C}}$  &$d_H$&$d_{sp}$\\
 \hline
\textbf{Type A}&  & &\\
$\langle 0 \rangle$& 1 & 0&0\\
$\langle 1 \rangle$& $3^{24}$ & 1&2\\
\hline
\textbf{Type B}&  & &\\
$\langle u \rangle$& $3^{12}$ &1 &2\\
$\langle u(x^2-2) \rangle$& $3^{10}$ &1 &2\\
$\langle u(x^2-2)^2 \rangle$&  $3^{8}$&1 &2\\
$\langle u(x^2-2)^3 \rangle$& $3^{6}$ &1 &2\\
$\langle u(x^2-2)^4 \rangle$& $3^{4}$ &2 &4\\
$\langle u(x^2-2)^5 \rangle$& $3^{2}$ &3 &6\\
\hline
\textbf{Type C}&  & &\\
$\langle (x^2-2) \rangle$& $3^{20}$ & 1&2\\
$\langle (x^2-2)^2 \rangle$& $3^{16}$ & 1&2\\
$\langle (x^2-2)^3 \rangle$&$3^{12}$  &1 &2\\
$\langle (x^2-2)^4 \rangle$& $3^{8}$ & 2&4\\
$\langle (x^2-2)^5 \rangle$& $3^{4}$ &3 &6\\
\hline
$\langle (x^2-2)+uz(x) \rangle$& $3^{20}$ &1 &2\\
$\langle (x^2-2)^2+uz(x) \rangle$&  $3^{16}$&1 &2\\
$\langle (x^2-2)^2+u(x^2-2)z(x) \rangle$& $3^{16}$ &1 &2\\
$\langle (x^2-2)^3+uz(x) \rangle$& $3^{12}$ & 1&2\\
$\langle (x^2-2)^3+u(x^2-2)z(x) \rangle$& $3^{12}$ &1 &2\\
$\langle (x^2-2)^3+u(x^2-2)^2z(x) \rangle$& $3^{12}$ &1 &2\\
$\langle (x^2-2)^4 +uz(x)\rangle$& $3^{12}$ & 1&2\\
$\langle (x^2-2)^4+u(x^2-2)z(x) \rangle$& $3^{10}$ & 1&2\\
$\langle (x^2-2)^4+u(x^2-2)^2z(x) \rangle$& $3^{8}$  &2 &4\\
$\langle (x^2-2)^4+u(x^2-2)^3 z(x) \rangle$& $3^{8}$& 2&4\\
$\langle (x^2-2)^5+uz(x) \rangle$&  $3^{12}$ & 1&2\\
$\langle (x^2-2)^5+u(x^2-2)z(x) \rangle$&$3^{10}$ & 1&2\\
$\langle (x^2-2)^5+u(x^2-2)^2z(x) \rangle$& $3^{8}$ & 1&2\\
$\langle (x^2-2)^5+u(x^2-2)^3 z(x) \rangle$& $3^{6}$ & 2&4\\
$\langle (x^2-2)^5+u(x^2-2)^4 z(x) \rangle$& $3^{4}$ &3&6\\
\hline

\hline
\end{tabular}
\end{table}

\begin{table}[H]
\ContinuedFloat
    \centering
    \caption{(Continued.) Hamming and Symbol-Pair distance of $(2+v+uv)-$Constacyclic code of length 6 over $\mathbb{F}_{3} + u\mathbb{F}_{3} + v\mathbb{F}_{3} +uv\mathbb{F}_{3}$}
    \begin{tabular}{|c|c|c|c|}
\hline              
Ideal($\mathcal{C})$&$\eta_{\mathcal{C}}$  &$d_H$&$d_{sp}$\\
 \hline
 \textbf{Type D}&  & &\\
 $\langle (x^2-2),u \rangle$& $3^{22}$ & 1&2\\
$\langle (x^2-2)^2, u \rangle$& $3^{20}$ & 1&2\\
$\langle (x^2-2)^2, u(x^2-2) \rangle$& $3^{18}$ &1 &2\\
$\langle (x^2-2)^3, u \rangle$& $3^{18}$ & 1&2\\
$\langle (x^2-2)^3, u(x^2-2) \rangle$&$3^{16}$  & 1&2\\
$\langle (x^2-2)^3, u(x^2-2)^2 \rangle$&  $3^{14}$&1 &2\\
$\langle (x^2-2)^4,u \rangle$& $3^{16}$ & 1&2\\
$\langle (x^2-2)^4, u(x^2-2) \rangle$&  $3^{14}$&1 &2\\
$\langle (x^2-2)^4, u(x^2-2)^2 \rangle$&$3^{12}$  & 1&2\\
$\langle (x^2-2)^4, u(x^2-2)^3 \rangle$&$3^{10}$  & 1&2\\
$\langle (x^2-2)^5,u \rangle$& $3^{14}$ & 1&2\\
$\langle (x^2-2)^5, u(x^2-2) \rangle$&  $3^{12}$&1 &2\\
$\langle (x^2-2)^5, u(x^2-2)^2 \rangle$&$3^{10}$  & 1&2\\
$\langle (x^2-2)^5, u(x^2-2)^3 \rangle$& $3^{8}$ & 1&2\\
$\langle (x^2-2)^5, u(x^2-2)^4 \rangle$& $3^{6}$ &2 &4\\
\hline
$\langle (x^2-2)+uz(x),u \rangle$& $3^{22}$ & 1&\\
\hline
$\langle (x^2-2)^2+uz(x),u \rangle$& $3^{20}$ & 1&2\\
$\langle (x^2-2)^2+uz(x),u(x^2-2) \rangle$& $3^{18}$ &1 &2\\
\hline
$\langle (x^2-2)^2+u(x^2-2)z(x),u \rangle$& $3^{20}$ &1 &2\\
$\langle (x^2-2)^2+u(x^2-2)z(x),u (x^2-2)\rangle$& $3^{18}$ &1 &2\\
\hline
$\langle (x^2-2)^3+uz(x),u \rangle$&$3^{18}$  &1 &2\\
$\langle (x^2-2)^3+uz(x),u (x^2-2)\rangle$&$3^{16}$  & 1&2\\
$\langle (x^2-2)^3+uz(x),u (x^2-2)^2\rangle$& $3^{14}$ & 1&2\\
\hline
$\langle (x^2-2)^3+u(x^2-2)z(x),u  \rangle$& $3^{18}$ & 1&2\\
$\langle (x^2-2)^3+u(x^2-2)z(x),u  (x^2-2) \rangle$&  $3^{16}$&1 &2\\
$\langle (x^2-2)^3+u(x^2-2)z(x),u  (x^2-2)^2 \rangle$&$3^{14}$  &1 &2\\
\hline
$\langle (x^2-2)^3+u(x^2-2)^2z(x), u \rangle$& $3^{18}$ &1 &2\\
$\langle (x^2-2)^3+u(x^2-2)^2z(x), u (x^2-2) \rangle$& $3^{16}$ &1 &2\\
$\langle (x^2-2)^3+u(x^2-2)^2z(x), u (x^2-2)^2\rangle$& $3^{14}$ & 1&2\\
\hline
$\langle (x^2-2)^4 +uz(x), u\rangle$&  $3^{16}$& 1&2\\
$\langle (x^2-2)^4 +uz(x), u(x^2-2)\rangle$& $3^{14}$ &1 &2\\
\hline
    \end{tabular}
\end{table}

\begin{table}[H]
    \centering
    \ContinuedFloat
    \caption{(Continued.) Hamming and Symbol-Pair distance of $(2+v+uv)-$Constacyclic code of length 6 over $\mathbb{F}_{3} + u\mathbb{F}_{3} + v\mathbb{F}_{3} +uv\mathbb{F}_{3}$}
    \begin{tabular}{|c|c|c|c|}
\hline              
Ideal($\mathcal{C})$&$\eta_{\mathcal{C}}$  &$d_H$&$d_{sp}$\\
 \hline
 $\langle (x^2-2)^4+u(x^2-2)z(x),u \rangle$&  $3^{16}$& 1&2\\
$\langle (x^2-2)^4+u(x^2-2)z(x),u (x^2-2)\rangle$& $3^{14}$ & 1&2\\
$\langle (x^2-2)^4+u(x^2-2)z(x),u (x^2-2)^2\rangle$& $3^{12}$ & 1&2\\
\hline
       $\langle (x^2-2)^4+u(x^2-2)^2z(x),u \rangle$& $3^{16}$ & 1&2\\
$\langle (x^2-2)^4+u(x^2-2)^2z(x),u (x^2-2)\rangle$& $3^{14}$ &1 &2\\
$\langle (x^2-2)^4+u(x^2-2)^2z(x),u (x^2-2)^2\rangle$& $3^{12}$ &1 &2\\
$\langle (x^2-2)^4+u(x^2-2)^2z(x),u (x^2-2)^3\rangle$& $3^{10}$ &1 &2\\
\hline
$\langle (x^2-2)^4+u(x^2-2)^3 z(x) ,u\rangle$& $3^{16}$ &1 &2\\
$\langle (x^2-2)^4+u(x^2-2)^3 z(x) ,u (x^2-2)\rangle$&$3^{14}$ &1 &2\\
$\langle (x^2-2)^4+u(x^2-2)^3 z(x) ,u  (x^2-2)^2\rangle$& $3^{12}$ & 1&2\\
$\langle (x^2-2)^4+u(x^2-2)^3 z(x) ,u (x^2-2)^3  \rangle$& $3^{10}$ &1 &2\\
\hline
$\langle (x^2-2)^5+uz(x),u \rangle$&$3^{14}$  & 1&2\\
\hline
$\langle (x^2-2)^5+u(x^2-2)z(x),u \rangle$&$3^{14}$  &1 &2\\
$\langle (x^2-2)^5+u(x^2-2)z(x),u (x^2-2)\rangle$&$3^{12}$  & 1&2\\
\hline
$\langle (x^2-2)^5+u(x^2-2)^2z(x),u \rangle$& $3^{14}$ & 1&2\\
$\langle (x^2-2)^5+u(x^2-2)^2z(x),u (x^2-2)\rangle$&  $3^{12}$&1 &2\\
$\langle (x^2-2)^5+u(x^2-2)^2z(x),u (x^2-2)^2\rangle$&$3^{10}$  &1 &2\\
\hline
$\langle (x^2-2)^5+u(x^2-2)^3 z(x) ,u\rangle$& $3^{14}$ & 1&2\\
$\langle (x^2-2)^5+u(x^2-2)^3 z(x) ,u(x^2-2)\rangle$&$3^{12}$  & 1&2\\
$\langle (x^2-2)^5+u(x^2-2)^3 z(x) ,u(x^2-2)^2\rangle$&$3^{10}$  & 1&2\\
$\langle (x^2-2)^5+u(x^2-2)^3 z(x) ,u(x^2-2)^3\rangle$& $3^{8}$ & 1&2\\
\hline
$\langle (x^2-2)^5+u(x^2-2)^4 z(x) ,u \rangle$& $3^{14}$ & 1&2\\
$\langle (x^2-2)^5+u(x^2-2)^4 z(x) ,u (x^2-2)\rangle$&$3^{12}$  &1&2\\
$\langle (x^2-2)^5+u(x^2-2)^4 z(x) ,u (x^2-2)^2\rangle$& $3^{10}$ &1 &2\\
$\langle (x^2-2)^5+u(x^2-2)^4 z(x) ,u (x^2-2)^3\rangle$& $3^{8}$ & 1&2\\
$\langle (x^2-2)^5+u(x^2-2)^4 z(x) ,u (x^2-2)^4\rangle$& $3^{6}$ & 2&4\\
\hline
    \end{tabular}
\end{table}
\end{example}


\end{document}